\newcommand{\crr}[1]{\tau^{cr}_{#1}}
\newcommand{\nr}[1]{\tau^{nr}_{#1}}
\newcommand{\cn}[1]{\tau^{cn}_{#1}}
\newcommand{\nn}[1]{\tau^{nn}_{#1}}
\def\BibTeX{{\rm B\kern-.05em{\sc i\kern-.025em b}\kern-.08em
    T\kern-.1667em\lower.7ex\hbox{E}\kern-.125emX}}
\begin{document}

\title{Technical Report: Modeling Average False Positive Rates of Recycling Bloom Filters
\thanks{This material is based upon work supported by the National Science Foundation under Grant Nos. CNS-1910138, CNS-2106197, and CNS-2148275.  Any opinions, findings, and conclusions or recommendations expressed in this material are those of the author(s) and do not necessarily reflect the views of the National Science Foundation.}} 

\author{\IEEEauthorblockN{Kahlil Dozier}
\IEEEauthorblockA{\textit{Department of Computer Science} \\
\textit{Columbia University}\\
New York, US \\
kad2219@columbia.edu}
\and
\IEEEauthorblockN{Loqman Salamatian}
\IEEEauthorblockA{\textit{Department of Computer Science} \\
\textit{Columbia University}\\
New York, US\\
ls3748@columbia.edu}
\and
\IEEEauthorblockN{Dan Rubenstein}
\IEEEauthorblockA{\textit{Department of Computer Science} \\
\textit{Columbia University}\\
New York, US \\
danr@cs.columbia.edu}
}
\maketitle

\begin{abstract}
Bloom Filters are a space-efficient data structure used for the testing of membership in a set that errs only in the False Positive direction.  However, the standard analysis that measures this False Positive rate provides a form of worst case bound that is both overly conservative for the majority of network applications that utilize Bloom Filters, and reduces accuracy by not taking into account the actual state (number of bits set) of the Bloom Filter after each arrival.  In this paper, we more accurately characterize the False Positive dynamics of Bloom Filters as they are commonly used in networking applications.  In particular, network applications often utilize a Bloom Filter that ``recycles'': it repeatedly fills, and upon reaching a certain level of saturation, empties and fills again.  In this context, it makes more sense to evaluate performance using the average False Positive rate instead of the worst case bound.  We show how to efficiently compute the average False Positive rate of recycling Bloom Filter variants via renewal and Markov models.  We apply our models to both the standard Bloom Filter and a ``two-phase'' variant, verify the accuracy of our model with simulations, and find that the previous analysis' worst-case formulation leads to up to a 30\% reduction in the efficiency of Bloom Filter when applied in network applications, while two-phase overhead diminishes as the needed False Positive rate is tightened.
\end{abstract}

\begin{IEEEkeywords}
Bloom Filter, False Positives 
\end{IEEEkeywords}

\section{Introduction}
\label{sec:intro}
Bloom Filters~\cite{Bloom1970space} are a time-tested, space-efficient data structure for identifying duplicate items within an input sequence, and have been applied in an extremely broad set of computing applications, including  packet processing and forwarding/routing in P2P networks \cite{5751342}, cache summarization and cache filtering in CDNs \cite{maggs2015algorithmic}, network monitoring \cite{li2016flowradar}, 
data synchronization \cite{7737013}, and even Biometric authentication \cite{martiri2017biometric}.   The Bloom Filter (BF) is attractive because it errs only in the {\em False Positive} direction (an arriving input, which we call a {\em message}, can be incorrectly identified as a repeat of a previous message) and never the {\em False Negative} direction (a repeat message is never incorrectly classified as new).

There are several variants of ``back-of-the-envelope'' analyses\footnote{Later in the paper, we look at different variants.} that give the False Positive rate $f$ of an $n+1$st message after $n$ messages have been inserted, e.g., one common variant yields $f = (1-(1-1/M)^{kn})^k$, where $M$ is the memory (number of bits in the BF) and $k$ is the number of hash functions, with each hash function mapping the message to one of the $M$ bits.  This analysis, while useful to demonstrate efficacy of the BF, does not measure False Positives in a way that best serves practical network applications for two reasons:

{\bf Observability}: First, consider the addition of an $n$th element.  If the application maintains a count, $b$, of the number of bits set, the next new arrival being classified as a repeat (a False Positive) can be directly computed as  $(b/M)^k$; the earlier analysis disregards the ability to inspect current BF state.  

{\bf Average rate}: Second, the False Positive rate of a newly arriving message grows as the BF's number of set bits increases.  Because of this, applications that merely wish to bound the {\em average} False Positive rate across all newly arriving messages significantly overestimate this average rate when only considering the message with the worst case bound. 

A class of applications that would benefit from a measure of False Positive rate that takes into account both observability and average rate are those that employ what we call a {\em Recycling Bloom Filter (RBF)}~\cite{maggs2015algorithmic,trindade2011hran,mchale2014stochastic}. 
 These applications utilize BF's across sufficiently long timescales, where new messages are continuously arriving, but are interspersed with repeats of previous messages.    Since newly arriving messages almost always set bits in the BF, over time the number of bits will become excessive, driving up the False Positive rate, such that some bits must be cleared to keep False Positive rates at a reasonable level.  While there are BF variants that do support deletion, their implementation introduces substantial overhead or requires knowledge of the specific elements destined for deletion  \cite{rothenberg2010deletable,lim2016ternary,fan2014cuckoo,quotientfilters}.  Such overhead and complexity is not suited for network applications where message arrivals have a ``fading popularity'' character, with repeat probability decreasing over time.    For applications with this type of message arrival process, one can utilize an active metric that maintains some measure of BF fullness, such as the number of messages inserted, or the number of bits set in the BF.  When this metric exceeds a threshold, the BF {\em recycles}: all bits are cleared, and the filling process begins anew (we say a new BF {\em cycle} is initiated).  This process can continue indefinitely, providing a simple and effective means for maintaining a low maximal (and average) False Positive rate of arrivals.  

In this paper, we develop recursive, quickly computable (on today's conventional hardware) analytical models that accurately capture the average False Positive rates of Recycling BFs (RBFs).  We show that using an active metric to gauge the recycle to maintain an {\em average} (as opposed to maximum) False Positive rate can offer a $30$\% or more increase in amount of messages stored.  

While the recycling process does introduce the possibility of False Negatives, the aforementioned decline in popularity of each message over time ensures that well-provisioned RBFs can make False Negative rates negligible. 
While a detailed investigation of False Negatives is beyond the scope of this paper, a commonly used solution (\cite{maggs2015algorithmic},  \cite{trindade2011hran}) to further drive down False Negative rate is to implement what we refer to as a {\em two-phase} RBF (see \ref{sec:twophase} for details) that splits the memory between alternately active and ``frozen'' BFs.  We extend our models to analyze these two-phase variants, and measure the loss in efficiency of minimizing False Positive rate as a function of memory and expected number of messages stored in the BF.

In this paper, we make the following contributions:
\begin{itemize}

\item In \S\ref{sec:background}, we provide background on Bloom Filters and their previous (worst case) analysis.

\item In \S\ref{sec:n-bounded} and \S\ref{sec:sigma-bounded}, we analytically examine the average False Positive rate of RBF variants using  via renewal and Markov Models, and extend our models to cover two-phase variants (\S\ref{sec:twophase}).

\item We verify our models using discrete-event-driven simulation in \S\ref{sec:sims}, and, using our models, contrast performance of the RBF variants as functions of memory size and desired False Positive rates.

\item The paper closes by presenting related work \S\ref{sec:related} and concluding \S\ref{sec:concl}.
\end{itemize}

Our findings show that the variant that recycles based on the number of bits set, while most complex to model to find the right parameters, is the best performing variant in that it can maintain a given False Positive rate while packing in the largest expected number of messages prior to recycling.  In contrast, the variants that recycle based on message counts to achieve an average False Positive rate are easier to model and achieve only a slightly smaller expected number of messages prior to recycling.  These average-case variants store around 30\%  more messages in expectation than the commonly used worst case variant.  We also find that implementing a two-phase variant will closely match the one-phase as the False Positive rate to be met decreases.
These results are important in that they point to users of BFs (and specifically RBFs) likely being overly conservative for their needs when using previous measures of False Positives.

\section{Background}
\label{sec:background}


\begin{table}[htbp]
\caption{Summary of Variables}
\begin{center}
\begin{tabular}{|c|c|}
\hline

\textbf{Variable} & \textbf{Description} \\
\hline
\hline
\multicolumn{2}{|c|}{Model input parameters}\\
\hline
$M$  & Size of RBF (bits) \\ 
\hline
 $k$ & Number of hash functions per message \\
\hline
$N$ & RBF recycle point (number of new message arrivals) \\ 
\hline

$\sigma$ & RBF recycle point (number of bits set) \\ 
\hline \hline
\multicolumn{2}{|c|}{Markov Model internal variables}\\
\hline
$\tau_{k}^{*}(i,j)$ & Probability next arrival takes RBF from $i$ to $j$ bits \\ & * is one of $\{cn, cr, nn, nr\}$\\ 
\hline

$\pi_{i}$ & Steady-state probability RBF contains $i$ bits set\\
\hline \hline 
\multicolumn{2}{|c|}{False Positive Variants}\\ \hline
$f_w$ & False Positive rate using worst-case  threshold (\ref{eq:wikiform})\\
\hline
$f_o$ & False Positive rate using fixed $N$ oracle lower bound (\ref{eq:oraclebound})\\
\hline
$f_a$ & f.p. rate using fixed $N$ bit-setting msgs lower bound (\ref{eq:avgN})\\
\hline
$f_{\sigma}^{1}$ & False Positive rate using $\sigma$ bound (\# bits set $\le \sigma$),\\ & one-phase RBF (\ref{longTermFalsePosRate})\\ 
\hline 
$f_{\sigma}^{2}$ & False Positive rate using $\sigma$ bound, two-phase RBF (\ref{eq:twophase})\\ 

\hline

\end{tabular}
\label{tab1}
\end{center}
\end{table}



\subsection{Bloom Filters}

A Bloom Filter~\cite{Bloom1970space} can be thought of as an array of $M$ bits indexed $0, \cdots , M-1$, all  initially set to $0$, that uses $k$ hash functions, each of which maps an arriving {\em message} to a bit in the array: the bit is then set to 1.  {\em Importantly, these $k$ hash functions' mappings are independent across messages, such that each message's assignment of bits is effectively independent from those assigned to other messages.}  This independence property is what makes the BF such a powerful abstraction, as well as amenable to mathematical analysis.

When a message arrives, the bits to which it hashes are checked before being set.  If they are all already set, the message is assumed to be a repeat of a prior message.  In the rare instance that a new message's $k$ hashes point to bits already set, it is incorrectly classified as a repeat, i.e., a {\em False Positive}.

\subsubsection{Colliding vs. Non-Colliding Hash Functions}

Conventional implementations of hash functions permit several of the $k$ hash function mappings to {\em collide} and hash to the same bit (i.e., a pseudo-random sampling with replacement).  A slightly more sophisticated hashing scheme can ensure the $k$ hashes for a message are distinct from one another (i.e., a pseudo-random sampling without replacement).  We respectively refer to these two Bloom Filter variants as {\em colliding} and {\em non-colliding}, and we analyze both types.  While we show how to incorporate these variants into our models, for values of $M$ and $k$ used in practice (typically  $M>1,000$), the colliding/non-colliding distinction has an unnoticeable impact on False Positive rate (collisions are sufficiently rare in the colliding variant).  Note that for either variant of hash functions, the independence of bits assigned to differing messages still holds.  Hence, our results are generally presented using the colliding variant.

\subsection{Traditional False Positive Analysis}

As discussed in \S\ref{sec:intro}, 
after having received $n$ unique messages, the (colliding) False Positive likelihood of the $n+1$st message is 
\vspace{-0.03in}
\begin{equation}
\label{eq:wikiform}
    f_w = [1 - (1 - \frac{1}{M})^{kn}]^{k}
\end{equation}
where, for a given $M$ and $n$,  $k \approx \frac{M}{n} \ln(2)$ will minimize the corresponding False Positive rate.
Wikipedia provides a succinct derivation of this simple formula~\cite{wikiBloom}, as well as a formula for non-colliding hash functions (in terms of Stirling numbers).  From (\ref{eq:wikiform}), the False Positive rate approaches $1$ as the number of messages $n$ increases--  an undesirable property for applications with no limit on the number of potential message insertions.  We call using such a bound in practice to decide when to recycle the BF a {\em worst case} bound (hence $f_w$) because the recycling occurs as soon as a particular message's False Positive rate exceeds $f_w$ (while the average over messages inserted will be lower).
We refer to the lifetime of a BF between recycling events as a {\em cycle} of the BF.

\subsection{Two-Phased Bloom Filter}

An RBF, upon recycling, clears all memory with respect to previous arrivals.  To mitigate the effects of this memory loss, a BF variant called a {\em Two-Phase Bloom Filter} has been used in various network applications, e.g.,~\cite{maggs2015algorithmic,trindade2011hran}.

The two-phase Bloom Filter can be thought of as {\em two} arrays with $M'$ bits each, indexed $0, ... , M'-1$, along with $k$ independent hash functions that are shared across both arrays.  Typically, $M' = \frac{M}{2}$, where $M$ is what the application's memory constraint would be for the standard (one phase) Bloom Filter.

The two-phase variant operates much like the standard recycling variant, with the addition that at any point in time, one Filter array is designated {\em active} and the other {\em frozen}.  Incoming messages are hashed and added to the active filter.  When the active filter is considered to be ``full enough'' (e.g., the number of bits set or messages stored exceeds some threshold), instead of clearing the bits of the active filter, we clear the bits of the {\em frozen} filter.  The roles of the active and frozen filter are then switched; incoming messages are added to the previously frozen (now active) filter, and the status of the previously active (now frozen) filter remains unaffected.  This process repeats indefinitely. 

In the two-phase variant, an arriving message is hashed and its bits are compared separately within both the active and frozen filters.  A match with either filter is interpreted as the message having been received previously (and having set the corresponding bits in the respective filter).  While a two-phase RBF cannot reduce its false positive rate as effectively as its one-phase counterpart using identical total memory, its corresponding False Negative rates will often be significantly lower.

\subsection{BF Thresholding Approaches}
\label{sec:thresholding}

There are two basic approaches to deciding when to recycle a BF.  One can count the number of bits $b$ that are set in the BF and recycle when $b$ exceeds a threshold $\sigma$.  We call this approach a {\em $\sigma$-bounded approach}.  Alternatively, one can bound the number of bit-setting messages, $N$, that can be admitted into the BF before recycling.\footnote{Messages that don't set bits are generally not counted because the user must assume they are repeat messages, and since repeat messages never set bits, the False Positive rate of the BF does not increase due to their repeated arrival.}  We call this approach an {\em $N$-bounded approach}.  We note that one complication with an $N$-bounded approach is that when a message that arrives without setting any bits, a user cannot tell if it is a repeat message or a False Positive.  A drawback of formulas for an $N$-bounded approach, such as (\ref{eq:wikiform}), is that they count new messages regardless of whether or not new bits are set; since in practice new messages that do not set bits are assumed to be repeats, a user will underestimate the number of new messages received when False Positives occur, such that applications of a formula such as (\ref{eq:wikiform}) to gauge when to recycle will be slightly off.

\subsection{Averaging Variants}
\label{sec:avg-variants}
A BF is only of use when the message arrival process consists both of new and repeat messages.  We wish to clarify a subtle point about what we are averaging over.  When a message first arrives, if it sets additional bits, it is identified as new, and if not, it is classified as a repeat.  Consider the $i$th new message that arrives, and let $\eta(i)$ represent the number of times the message arrives within a cycle, and let $F_i$ be an indicator that equals 1 if the $i$th message, upon first arrival, is a False Positive.  We can count these $\eta(i)$ arrivals in three different ways.
\begin{enumerate}
   
\item {\bf count-first}: We can assume that only the first arrival can count as a False Positive, making the False Positive rate for the cycle equal to $\sum_i F_i / \sum_i \eta(i)$
\item {\bf count-each:} We can assume that each arrival of a message that initially triggered a False Positive is counted as a False Positive, making the False Positive rate equal to $\sum_i F_i \eta(i) / \sum_i \eta(i)$
\item {\bf count-instance}: We only measure the False Positive rate of first-time arrivals of a message, making the False Positive rate equal to $\sum_i F_i / \sum_i 1$

\end{enumerate}

Which variant to use depends on the needs of the underlying application.  For instance, a False Positive in a caching application that uses the BF to determine cache hits for requests would make the wrong assessment for an initial arrival of a request, but the mistake would be detected upon attempting to fetch the item from the cache, so the item would be cached and subsequent requests would not be false positives.  Hence, count-first is the appropriate measure.  Alternatively, a mechanism that wishes to count unique messages would have a single miscount for each message initially classified incorrectly as having been previously received, hence count-instance is the appropriate measure.  Finally, in a setting where the BF is used to drop repeat requests, and where a penalty is paid per dropping of an unsatisfied request, count-each would be the appropriate measure.

We can prove that count-instance is the most stringent (largest) of False Positive rates of the three versions\footnote{count-first is straightforwardly smaller count-each.}.  The proof showing that count-instance results in the largest False Positive rate is omitted here but is included in \S\ref{sec:count-instance-proof} in the Appendix.  This observation yields two benefits.  First, if we compute the false positive rate for count-instance, we have upper bounds on the rates of the count-first and count-each variants.\footnote{In fact, our analysis of count-instance is itself an upper-bound in an RBF for a subtle reason that we address in the extended version.}  Second, count-instance is not affected by underlying distribution of arriving messages: how often a particular message repeats is of no import, since we only decide its contribution to False Positive rate based on its first arrival, and any subsequent arrivals do not alter the bits in the BF, making its analysis identical regardless of this underlying distribution, whereas the other  variants do depend on the underlying distribution.

The remainder of this paper utilizes count-instance, i.e., the False Positive rate is defined with respect to how new arrivals are classified upon their arrival, as it is the most stringent of the three, and provides an upper bound on the other interpretations of false-positive rate.

\section{\texorpdfstring{$N$}{N}-bounded average False Positive rate}
\label{sec:n-bounded}

We compute a lower bound on the $N$-bounded average False Positive rate that utilizes the formula which determines worst-case False Positive rate (i.e., (\ref{eq:wikiform}) for colliding hash functions).  Define $F_i$ to be an indicator r.v. that equals 1 when the $i$th new arrival is a False Positive.  Note that $E[F_i] = P(F_i=1)$ is solved directly by worst-case bound formula (\ref{eq:wikiform}) directly.

First, let us consider an {\em oracle} user, who, when inserting a message that sets no bits, can distinguish whether this message is a repeat, or a new message that failed to set bits.\footnote{This oracle is a hypothetical user because the user can make such a distinction has no need for a BF.} This oracle would insert exactly $N$ new messages, and the resulting average False Positive rate can be computed as a simple renewal process, yielding a count-instance {\em oracle $N$-bounded average} False Positive rate of:
\begin{IEEEeqnarray}{cLr}
\label{eq:oraclebound}
f_o & = & \sum_{i=1}^N E[F_i] / N.  
\end{IEEEeqnarray}

In contrast, a ``real'' user would assume that a new message that sets no bits is instead a repeat message, and would not include such a message in their count of $N$ messages.  Hence, the oracle would recycle no later than an actual user, thereby ensuring a lower false positive rate. 
 For this real user, we let $i$ iterate over the number of new messages that set bits, and define r.v. $R_i$ to equal the number of new messages that  do not set bits and arrive between the $i-1$st and $i$th new messages that do set bits.  Then for a single cycle of the RBF, the False Positive rate is $\sum_{i=1}^N R_i / \sum_{i=1}^N (R_i + 1)$.

Note that for the messages that comprise a given $R_i$, the number of bits set in the BF remains constant (none of the $R_i$ messages are changing the number of bits set), so each such message has an equal likelihood, which we call $p_i$, of not setting additional bits.  Note it follows that $P(R_i > j) = {p_i}^{j+1}$, such that $E[R_i] =  p_i (1-p_i)^{-1}$.  By renewal theory, we have the False Positive rate is thus equal to $$\frac{\sum_{i=1}^N p_i (1-p_i)^{-1}}{\sum_{i=1}^N (1+p_i (1-p_i)^{-1})}.$$

Finally, we show that $p_i$ is lower-bounded by $P(F_i=1)$.  This follows from the fact that the $i$th new message that set bits is actually the $j$th arriving message with $j \ge i$, so $p_i = P(F_j=1)$ for some $j \ge i$.  The lower bound then follows from the fact that $P(F_i=1)$ is increasing in $i$ (False Positive rates increase as more messages arrive), and that the worst-case analyses' value for $N$ is over all messages, including those that set no bits.  Hence, $p_i = P(F_j=1) \ge P(F_i=1)$ for $j \ge i$.  Since we are underestimating each $p_i$, we are underestimating the likelihood of new arrivals not setting bits, so we are underestimating False Positive rates.  Note that for low False Positive rates that are used in practice, $E[R_i]$ will tend to be very small, such that the bound is expected to be tight.

It will be useful for our purposes to define, for $f_i = P(F_j = 1)$ and directly applying \eqref{eq:wikiform},  a lower bound on the  $N$-bounded False Positive rate of \begin{equation}
\label{eq:avgN}
  f_a = \frac{\sum_{i=1}^N f_i (1-f_i)^{-1}}{\sum_{i=1}^N (1+f_i (1-f_i)^{-1})}.  
\end{equation}

We refer to this as the {\em average-case N} lower bound. Note that these lower-bound the False Positive rate as a function of $N$.  In \S\ref{sec:sims}, we use these formulae to determine $N$ as a function of False Positive rate: since the formula gives a lower bound on the False Positive rate for a given $N$, it indicates an upper bound on the value of $N$ needed to ensure (lower-bound) a given False Positive rate.

\section{\texorpdfstring{$\sigma$}{sigma}-bounded average False Positive rate}
\label{sec:sigma-bounded}

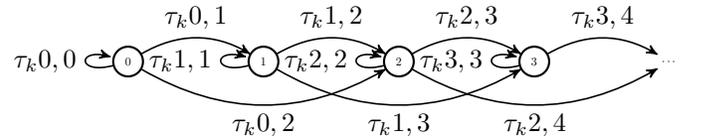
\begin{figure}[!htb]
    
\begin{center}
    
    \begin{tikzpicture}[->, >=stealth', auto, semithick, node distance=4cm]
	\tikzstyle{every state}=[fill=white,draw=black,thick,text=black,scale=0.45]
    \tikzstyle{dots} = [draw=none, scale=0.45]
	\node[state]    (A)                     {$0$};
	\node[state]    (B)[right of=A]   {$1$};
	\node[state]    (C)[right of=B]   {$2$};
	\node[state]    (D)[right of=C]   {$3$};
    \node[dots] (E) [right of=D] {$\cdots$};
	\path
	(A) edge[loop left]			node{$\tau_{k}{0,0}$}	(A)
    edge [bend left, above] node{$\tau_{k}{0,1}$} (B)
    edge [bend right, below] node{$\tau_{k}{0,2}$} (C)
    (B) edge [loop left] node{$\tau_{k}{1,1}$} (B)
    (C) edge [loop left] node{$\tau_{k}{2,2}$} (C)
	(B) edge[bend left,above] node{$\tau_{k}{1,2}$} (C)
    edge [bend right, below] node{$\tau_{k}{1,3}$} (D)
	(C) edge[bend left,above]	node{$\tau_{k}{2,3}$}	(D)
    edge [bend right,  below] node{$\tau_{k}{2,4}$} (E) 
	(D) edge[loop left] node{$\tau_{k}{3,3}$}	(D)
 (D) edge[bend left, above] node{$\tau_{k}{3,4}$} (E);
 
	\end{tikzpicture}
 \centering
    \caption{Partial Markov Chain diagram for Recycling Bloom Filter, showing the ``forwards'' transition behavior, for $k=2$ using colliding hash functions.  States represent the number of RBF bits set to $1$ and $\tau_{k}{i,j}$ is the probability of transitioning from state $i$ to state $j$.}
    \label{fig:Markov1}
 \end{center}
    
\end{figure}

As a starting point towards developing a recursive analytical model that accurately captures the average False Positive rates for the $\sigma$-bounded model, we consider a process where messages sequentially arrive at the RBF.  Once a message arrives, the $k$ hash functions are applied to it and the corresponding bits are set to $1$ (or remain $1$ if already set).  In a departure from the $N$-bounded method of Bloom Filter recycling, we do not reset the RBF after $n$ unique message insertions.  Instead, we have a predetermined bit capacity parameter $\sigma < M$.  After the insertion of a message, let $b$ be the number of bits set to 1 in the RBF.  We reset the RBF when $b > \sigma$.  Our recursive Markov Model will allow us to determine the long-term average False Positive rate across all new arrivals for a given value of $\sigma$ (another departure from the traditional False Positive analysis, which only considers the ``worst case'' False Positive rate for the last arrival of a RBF reset cycle).

\subsection{RBF Recursive Markov Model}
 Our Markov Model for the dynamics of the RBF has $\sigma + 1$ states labeled $0, 1, \cdots, \sigma$.  The state labels correspond to the number of bits set to $1$ in the RBF during a given cycle.  After the arrival of a {\em unique} message (i.e., not yet seen this cycle), depending on the outcome of the hash operation, the number of set bits in the RBF may or may not change.  We represent this event by transition probabilities between the states, labeled  $\tau_{k}{i,j}$.  The number of employed hash functions, $k$, determines which transition probabilities are nonzero-- e.g. from state $i$, the ``largest'' possible state we can transition to is state $\text{min}(i + k, \sigma)$.  A partial diagram of our Markov Model, displaying only the ``forward'' transition behavior, is shown in Fig. ~\ref{fig:Markov1}. 

 \subsection{Retaining vs. Non-Retaining RBF}

 We model the resetting of the RBF with ``backwards'' transition probabilities from certain states $j$ to earlier states $i < j$.  In practice, RBFs can implement recycling behavior in two ways: if message $m$ causes the RBF to reset ($b > \sigma)$, after clearing all the bits, we can:

 \begin{enumerate}[label=\roman*]
     \item \label{firstchoice} Retain message $m$ and re-insert it as the first message for the new RBF cycle
     \item \label{secondchoice} Forget message $m$ and have the {\em next} arrival as the first message for the new RBF cycle
 \end{enumerate}

 We call (\ref{firstchoice}) the {\em retaining} version of the RBF and (\ref{secondchoice}) the {\em nonretaining} version.  For both the retaining and non-retaining RBF, nonzero backwards transition probabilities exist at states $\{\sigma - k + 1, \sigma - k + 2, ... , \sigma\}$. For the retaining RBF, the backwards transitions are to states $\{1, 2, ... , k\}$.  For the non-retaining RBF, all backwards transitions are to state $0$.   While we analyze both versions, the non-retaining version is slightly simpler for analysis purposes and is the version we work with unless otherwise mentioned. Fig. ~\ref{fig:Markov2} shows a partial diagram of our Markov Model displaying the backward transitions.

  
    
    
 
   
   

\begin{figure}[!hb]
    \centering
    \includegraphics[width = \columnwidth]{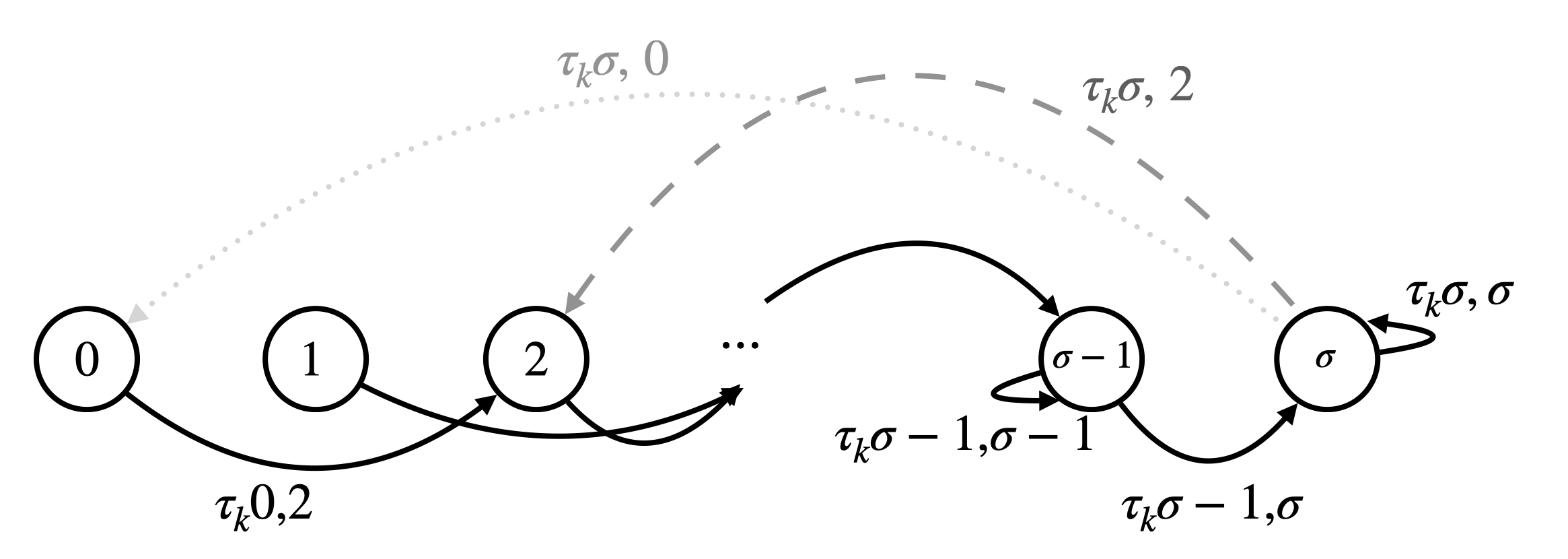}
    \caption{Partial Markov Chain diagram for Recycling Bloom Filter showing the ``backwards'' transition behavior, for $k=2$ (some transition arrows omitted). The light grey arrow depicts the transition for the {\em retaining version}, while the dark grey arrow is for the {\em non-retaining version}, non-colliding.}
    \label{fig:Markov2}
\end{figure}
 \subsection{RBF Long-Term Average False Positive Rate}

With our Markov Model specified, we can derive an expression for the long-term average False Positive Rate of the RBF, for both colliding and non-colliding BFs, and also derive the backward transitions for colliding/noncolliding and retaining/non-retaining versions (four in total).

\subsubsection{Transition Probabilities}

We start by deriving expressions for the forward transition probabilities $\tau_{k}(i,j)$ of an arriving (new message), where $k$ is the number of hash functions, $i$ is the number of bits set to 1 in the BF prior to insertion of the arriving message, and $j$ is the ending number of bits set to 1 (after the next message is hashed into the filter).   We accomplish this for forward transitions from $i$ to $j \ge i$  by recursively defining $\tau_{k}(i,j)$ in terms of $\tau_{k-1}(i,j)$ and $\tau_{k-1}(i,j-1)$. The explicit recursive equation depends upon whether the BF is being implemented using colliding (vs. non-colliding) hash functions, as well as whether it is retaining (vs. non-retaining).  The intuition behind the recursion is that the $k$th hash function will add at most one bit in the BF after the other $k-1$ hashes have been applied.  The BF transitions from $i$ to $j$ bits being set if either a) the first $k-1$ hashes transition to $j$ bits and the last hash is to a bit previously set (by earlier messages and/or the preceding $k-1$ hashes of the current message), or b) the first $k-1$ hashes transition to $j-1$ bits set, and the last hash maps to an unset bit. 
Note that these recursive equations make use of the independence properties of the hash functions applied within Bloom Filters; each message's hashes are independent of those of previous hashes, colliding hashes for a given message are also independent from one another, and non-colliding hashes are chosen via sampling without replacement.  The former property ensures that a message arriving to a BF with $b$ bits set can assume these $b$ bits were simply sampled uniformly at random.

We define  $\cn{k}(i,j)$ to be the transition probability when the BF uses colliding hash functions and is non-retaining.  In this case when $j \ge i$:
\begin{equation}
\label{eq:nonret}
    \cn{k}(i,j) = \cn{k-1}(i,j) \frac{j}{M} + \\ \cn{k-1}(i, j-1)\frac{M-j+1}{M}
\end{equation}
\begin{equation}
    \cn{k}(i, 0) = \cn{k-1}(i,0) + \cn{k-1}(i, \sigma)\frac{M-\sigma}{M}
\end{equation}

And the base case relations:
\begin{eqnarray}
    \label{eq:base8}
    \cn{0}(i,j) & = & I_{i=j}\\
    \cn{k}(0,0) & = 
    & 0; k>0 \\
    \label{eq:base3}
    \cn{k}(i,j) & = & 0; \phantom{-} j > i + k, k>0 
\end{eqnarray}
where $I_X$ is an indicator that equals 1 when $X$ is true, and is otherwise 0.  For colliding + retaining, denote $\tau_{k}(i,j)$ as $\crr{k}(i,j)$.  $\crr{k}(i,j)$ has identical form to equation (\ref{eq:nonret}) for $j \ge i$.  We also have:
\begin{equation}
    \crr{k}(i,0) = 0
\end{equation}
\begin{equation}
    \crr{k}(i,j) = \cn{k}(i,0) \cdot \cn{k}(0,j); \phantom{-} j \le k < i 
\end{equation}

The base cases for $\crr{0}(i,j)$ have identical form as (\ref{eq:base8})-(\ref{eq:base3}).  
For non-colliding + non-retaining, denote $\tau_{k}(i,j)$ as $\nn{k}(i,j)$. We have the recursive relations:  
\begin{multline}    
    \nn{k}(i,j) = \nn{k-1}(i,j) \frac{j-(k-1)}{M-(k-1)} +  \\ 
    \nn{k-1}(i,j-1)\frac{M-j+1}{M-(k-1)}; \phantom{-} j \ge i
\end{multline}
\begin{equation}
    \nn{k}(i,0) = \nn{k-1}(i,0) + \nn{k-1}(i, \sigma)\frac{M-\sigma}{M-(k-1)}
\end{equation}
\begin{equation}
    \nn{k}(0,k) = 1
\end{equation}
The base cases for $\nn{0}(i,j)$ have identical form as (\ref{eq:base8})-(\ref{eq:base3}).

For non-colliding + retaining, denote $\tau_{k}(i,j)$ as $\nr{k}(i,j)$.  $\nr{k}(i,j)$ has identical form to $\nn{k}(i,j)$ for $j \ge i$.  We also have:
\begin{equation}
    \nr{k}(i,k) = \nn{k}(i,0); \phantom{-} i > k
\end{equation}
The base cases for $\nr{0}(i,j)$ have identical form as (\ref{eq:base8})-(\ref{eq:base3}). 

\subsubsection{Steady-state Probabilities}

For our Markov Model, let us denote by $\pi_{i}$ the steady-state probability the RBF has $i$ bits set, where $0 \le i \le \sigma$ (this is equivalent to the probability of the Markov chain being in state $i$).  

For both the colliding + non-retaining / non-colliding + non-retaining versions of the RBF, the Markov Chain is clearly ergodic. Therefore the stationary distribution $\pi_{i}$ can be computed by solving for $\pi_i$: 
\begin{equation}
    \pi_{i} =\frac{ \sum_{j=\text{max}(0,i-k)}^{i-1} \pi_{j} \tau_{k}(j,i) }{1 - \tau_{k}(i,i)}
\end{equation}
We can solve the equations for all $i$ by starting with an arbitrary positive assignment (a ``guess'') for $\pi_{0}$; call this $g_{0}$. Then, we apply (12) to recursively compute $g_{i}$ for $\pi_{i}, i > 0$ as functions of $g_{0}$.  Since the sum over all steady state probabilities must be 1, we can renormalize each $\pi_{i} = \frac{g_{i}}{\sum_{j=0}^{\sigma}g_{j}}$.

For the steady-state probabilities of the non-colliding, non-retaining version of the RBF, we can apply an identical method as above, except we have $\pi_{i} = 0$ for $i < k$.  Thus, we make our initial ``guess'' for the steady-state probability $\pi_{k}$, but the sets of equations are otherwise unaltered from the previous case.

For the steady-state probabilities of the colliding, retaining version of the RBF, we have an additional complexity.  Equation (12) now holds only for $i > k$.  For $0 \le i \le k$, due to the backwards transitions to these states, the steady-state probabilities are given by:
\begin{equation}
\label{linearPi}
    \pi_{i} = \frac{\sum_{j=1}^{i-1}\pi_{j}\tau_{k}(j,i) + \sum_{j = \sigma - k + 1}^{\sigma} \pi_{j}\tau_{k}(j,i)}{1 - \tau_{k}(i,i)}
\end{equation}
To solve for the set of $\pi_{i}$, we can take the set of equations of (12) and (13), along with the constraint $\sum_{j=0}^{\sigma}\pi_{i} = 1$, and solve by standard linear methods.

\subsubsection{Expression for Long-Term Average False Positive Rate}
Let us denote by $\text{$\rho$}(i)$ the long-term average False Positive rate when in state $i$.  

For both the colliding and non-colliding versions of the RBF, we have:
\begin{equation}
\label{eq:rho-def}
    \rho(i) = \left\{ \begin{array}{ll}
    \frac{\binom{i}{k}}{\binom{M}{k}} & \mbox{non-colliding}\\
    (\frac{i}{M})^{k} & \mbox{colliding}\\
    \end{array} \right \}
\end{equation}
Then the overall long-term average False Positive rate for the RBF is given by (using the desired definition of  $\rho(i)$ from (\ref{eq:rho-def}):
\begin{equation}
\label{longTermFalsePosRate}
    f_{\sigma}^{1} = \sum_{i=0}^{\sigma} \pi_{i} \cdot \rho(i)
\end{equation}

\subsection{Closed-Form expression for $k=1$}

For the case of colliding, non-retaining, and $k=1$, we can explicitly solve the above expressions in closed form for the $\pi_{i}$, yielding a closed form expression for the false long-term average False Positive rate:

\begin{equation}
\label{longTermFalsePosRateExact}
    f_{\sigma}^{1,k=1} = \sum_{i=0}^{\sigma} \frac{i}{M(M-i)\sum_{j=0}^{\sigma}\frac{1}{M-j}}
\end{equation}

\subsection{Computational Complexity of Calculations}

Our computation proceeds with an outer-most loop iterating over $k$, with $\tau_0(i,j) = 1$ only when $i=j$ and is otherwise 0.  For the next value of $k$, we compute $\tau_k(i,j)$ for values of $i$ and $j$ ranging between 0 and $\sigma+k$.  Noting that $\tau_k(i,j)=0$ for $j>i+k$ (hashing to $k$ bins cannot set more than $k$ bits), this involves $(\sigma+k)k$ total computations, making the overall time complexity of $O(Mk^2)$.  If only one value of $k$ is being evaluated, it is possible to produce the table of all necessary $\tau_k(i,j)$ holding only $\sigma (k+1)$ values, so it can also be done efficiently in memory.

Once the requisite $\tau_{k}(i,j)$ have all been computed, computing the steady-state probabilities $\pi_i$ for the {\em non-retaining} case can be done in $O(kM)$ time, since there are a total of $\sigma+1$ steady-state probabilities ($\sigma \le M$) and each of them involve a sum across $k$ terms.

Computing the steady-state probabilities $\pi_i$ for the {\em retaining} case can be done in $O(k^{2}M)$ time.  There are a total of $\sigma+1$ steady-state probabilities ($\sigma \le M$), and $\sigma + 1$ corresponding equations for $\pi_i$.  Each of these equations contain $k+1$ terms; linearly combining a pair of rows is therefore an $O(k)$ operation.  Thus, with $O(M)$ row reductions, each with complexity $O(k)$,  we can compute $\pi_i$ for all $i \le k$ (total complexity $O(k^{2}M)$.  We can subsequently compute $\pi_i$ for $ i > k$ directly from \eqref{linearPi}; this takes $O(kM)$ time, so the row reduction step is the complexity bottleneck. 

Computing the long-term average RBF False Positive rate as in \eqref{longTermFalsePosRate} can be done in $O(kM)$ time for colliding and $O(M)$ time for non-colliding (using the fact that $\binom{i+1}{k} = \binom{i}{k}\frac{i+1}{i+1-k}$ to compute the $\rho(i)$ terms in constant time).

Since the long-term average RBF False Positive rate is computed by successive and separate (but dependent) computations each upper-bounded by $O(k^{2}M)$, the overall computational complexity is $O(k^{2}M)$.

\subsection{Expected messages within a \texorpdfstring{$\sigma$}{sigma}-bounded RBF}
\label{sec:expectedN}
In each cycle of a $\sigma$-bounded RBF, the number of messages that are hashed into the BF prior to recycling will vary, depending on the number of hash collisions between (and for colliding, also within) messages.  We conclude our analysis of the one-phase BF by showing how the expected number of messages can be computed.

Consider a particular sample path (i.e., cycle) of the RBF, and let $b(i)$ indicate the number of bits set after the arrival of the $i$th new message.  Let $N(b(i))$ equal the number of additional messages sent after the $i$th message to trigger a recycle (i.e., cross the sigma threshold).  Note that when $b(i) \ge \sigma$, we have already crossed the threshold such that $N(b(i)) = 0$.  Otherwise, when $b(i) < \sigma$, more messages must be received, such that
\begin{IEEEeqnarray*}{lCr}
N(b(i)) = 1 + N(b(i+1)) & = & 1 + \sum_{j=0}^k X_{j}(i) N(b(i)+j)
\end{IEEEeqnarray*}
where $X_j(i)$ is an indicator that equals 1 only when $j$ additional bits get set from the arrival of the $i$th message.

Noting the above equation holds irrespective of the value of $i$, we can simply replace $b(i)$ with $b$ and just write $N(b)$, which we replace with $N_b$ to get the result $N_b = 1 + \sum_{j=0}^k X_{j}(i) N_{b+j}$.  We can similarly substitute in the r.v. $X_{b,b+j}$ for $X_{j}(i)$ which indicates that this $i+1$st message takes the BF from having $b$ bits set to $b+j$. This can be solved to permit a reverse recursion:
\begin{IEEEeqnarray}{lCr}
    N_b =&  0, & b \ge \sigma\\
    N_b =&  \frac{1 + \sum_{j=1}^k X(b,b+j) N_{b+j}}{1 - X(b,b)},& b < \sigma
\end{IEEEeqnarray}
 Noting independence of the $X_{b,b+j}$ from $N_b$, and that $E[X_{b,b+j}] = \tau_k(b,b+j)$, we can rephrase the above as an expectation:
\begin{IEEEeqnarray}{lCr}
    E[N_b]   =&  0, & b \ge \sigma\\
    E[N_b]  =& \frac{1 + \sum_{j=1}^k \tau_k(b,b+j) E[N_{b+j}]}{1 - \tau_k(b,b)},&  b < \sigma
\end{IEEEeqnarray}
and our solution is simply $E[N_0]$.

\subsubsection{Two-Phase RBF Long-term average False Positive Rate}

\label{sec:twophase}

We can use the results from the standard RBF to also derive an expression for the False Positive rate of the two-phase RBF variant.
Let $F_{i}$ denote the steady-state probability of the {\em frozen} filter having $i$ bits. Note $F_{i}$  is only nonzero for  $i$ between $\sigma-k+1$ and $\sigma$. To compute $F_{i}$, we first define an un-conditional (non-normalized) value of $\hat{F}_{i}$ to then compute a normalized (conditional) version of $F_i$ in terms of $\hat{F}_i$:
 
\begin{IEEEeqnarray}{lCr}
    \hat{F}_{i} & = & 
    \pi_{i} \sum_{j=\sigma+1}^{i + k}\tau_{k}(i,j)\\
    F_{i} & = & \frac{\hat{F}_{i}}{\sum_{j=\sigma-k+1}^{\sigma}\hat{F}_{i}}
\end{IEEEeqnarray}

The long-term average False Positive rate for the two-phase RBF is then given by the expression (using the desired definition of  $\rho(i)$ from (\ref{eq:rho-def}):

\begin{equation}
\label{eq:twophase}
    f_{\sigma}^{2} = 1 - (1 - \sum_{j=0}^{\sigma}\pi_{j}\rho(j))(1 - \sum_{i = \sigma-k+1}^{\sigma}F_{i} \rho(i))
\end{equation}

\section{Results}
To evaluate our models and draw conclusions about the performance of the RBF under varying parameters, we proceed in three steps:  
\begin{enumerate}
    \item We verify the accuracy of our models through discrete event-driven simulations implemented in Python.  The RBF data structures were implemented using standard Python libraries. The uniform hash function generation and random message arrival process leveraged the Python \texttt{random} library.  Code can be found at: \cite{SimCodeRepository}

    \item With the accuracy of the models verified, we turn to the question of the maximum message capacity achievable by a RBF while staying below a given False Positive rate.  We consider four models of a one-phase RBF: the $\sigma$-bounded model (\ref{longTermFalsePosRate}), the ``oracle'' $N$-bounded False Positive rate \eqref{eq:oraclebound}, the lower-bound on the $N$-bounded average False Positive rate, $f_a$ \eqref{eq:avgN}, and the traditional worst-case bound  $f_w$ \eqref{eq:wikiform}. The $\sigma$-bounded model maintains the highest number of messages, followed by $f_o$, $f_a$, and $f_w$ the lowest. This sequence implies that $N$-bounding variants are overly conservative estimates. Therefore, if one aims to size a RBF to achieve the best performance, the $\sigma$-bounded variant should be utilized.

    \item Finally, for the $\sigma$-bounded model, we investigate the trade-off between using a one-phase and two-phase RBF variant, in terms of the maximum memory capacity achievable by each variant while staying below a given False Positive rate. 
\end{enumerate}
     
We note that the results for the combinations of colliding/non-colliding and retaining/non-retaining variants are nearly identical, and thus for brevity all graphs represent the colliding/non-retaining combination.

\label{sec:sims}

\subsection{Verification of Model Accuracy}
\label{verification_of_model}
\subsubsection{One and Two-Phase Sigma Bound}

We verify the accuracy of $f_{\sigma}^{1}$ and $f_{\sigma}^{2}$ through  discrete event-driven simulations.  For all simulations, we have a fixed RBF size of $M = 1000$ bits.  The two-phase RBF splits this memory evenly between each filter.  We run multiple simulation epochs of $100,000$ message arrivals, recording the average False Positive rate at the end of each epoch.  Message arrivals are drawn uniformly from a pre-generated distribution of $1000$ messages.  Given a collection of average False Positive rates, we can plot sample means and confidence intervals using standard statistical methods \cite{sidik2002simple}.  We compare these results to the values predicted by $f_{\sigma}^{1}$ and $f_{\sigma}^{2}$.  Fig. \ref{fig:onePhaseVerifyAcc} depicts the results.  We simulated for a total of $7$ epochs for $f_{\sigma}^{1}$ and $10$ epochs for $f_{\sigma}^{2}$. The $99$\% confidence interval is indicated on the plot by the shaded area around the simulation curves.  For all data points, the value of $f_{\sigma}^{1}$ and $f_{\sigma}^{2}$ lies within the confidence interval.  This is strong evidence in favor of the accuracy of $f_{\sigma}^{1}$ and $f_{\sigma}^{2}$  in modelling False Positive rates for RBFs.

\begin{figure}[!htb]
\centerline{\includegraphics[width=\columnwidth]{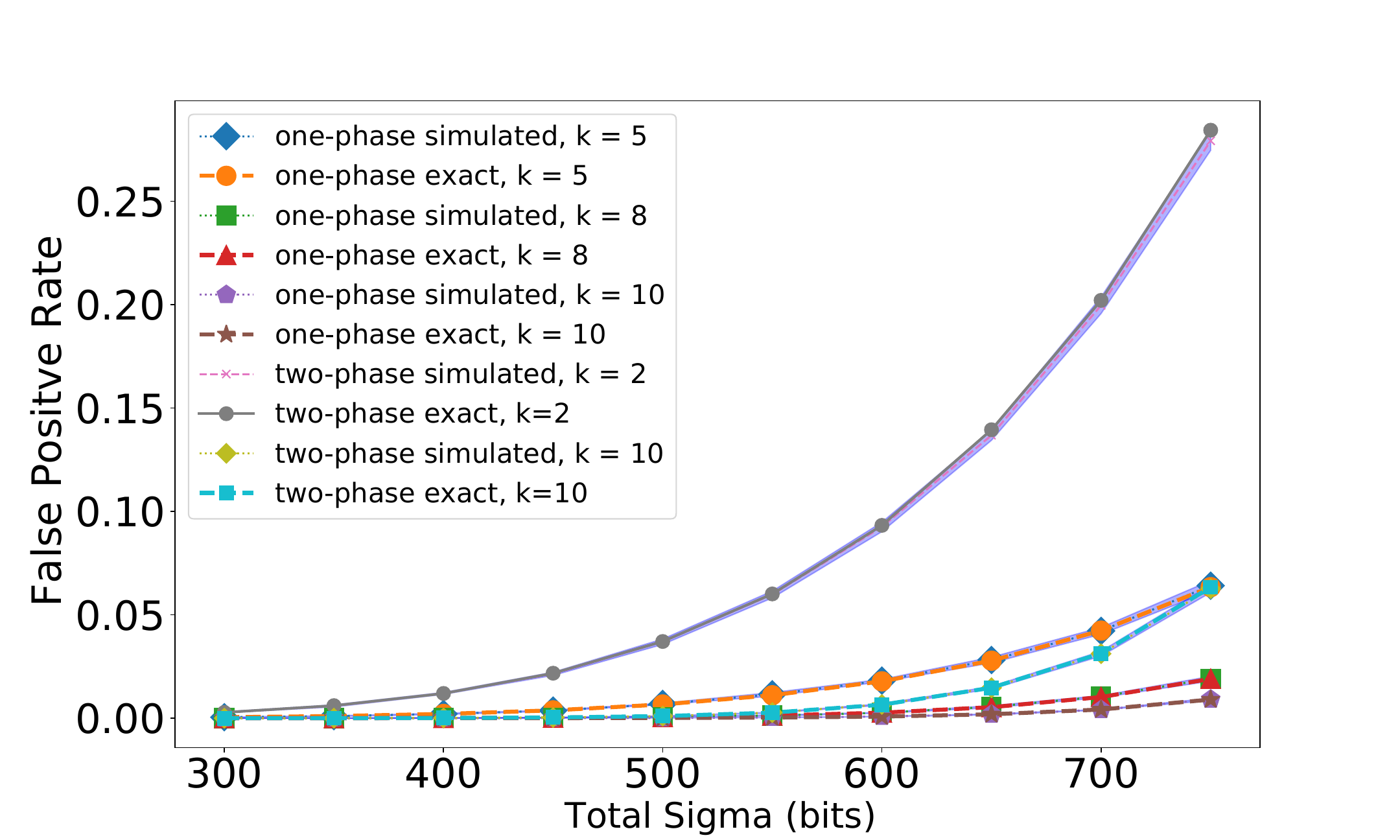}}
\caption{Verifying our model accuracy through simulation.}
\label{fig:onePhaseVerifyAcc}
\end{figure}

\begin{figure}[htbp!]
\centerline{\includegraphics[width=\columnwidth]{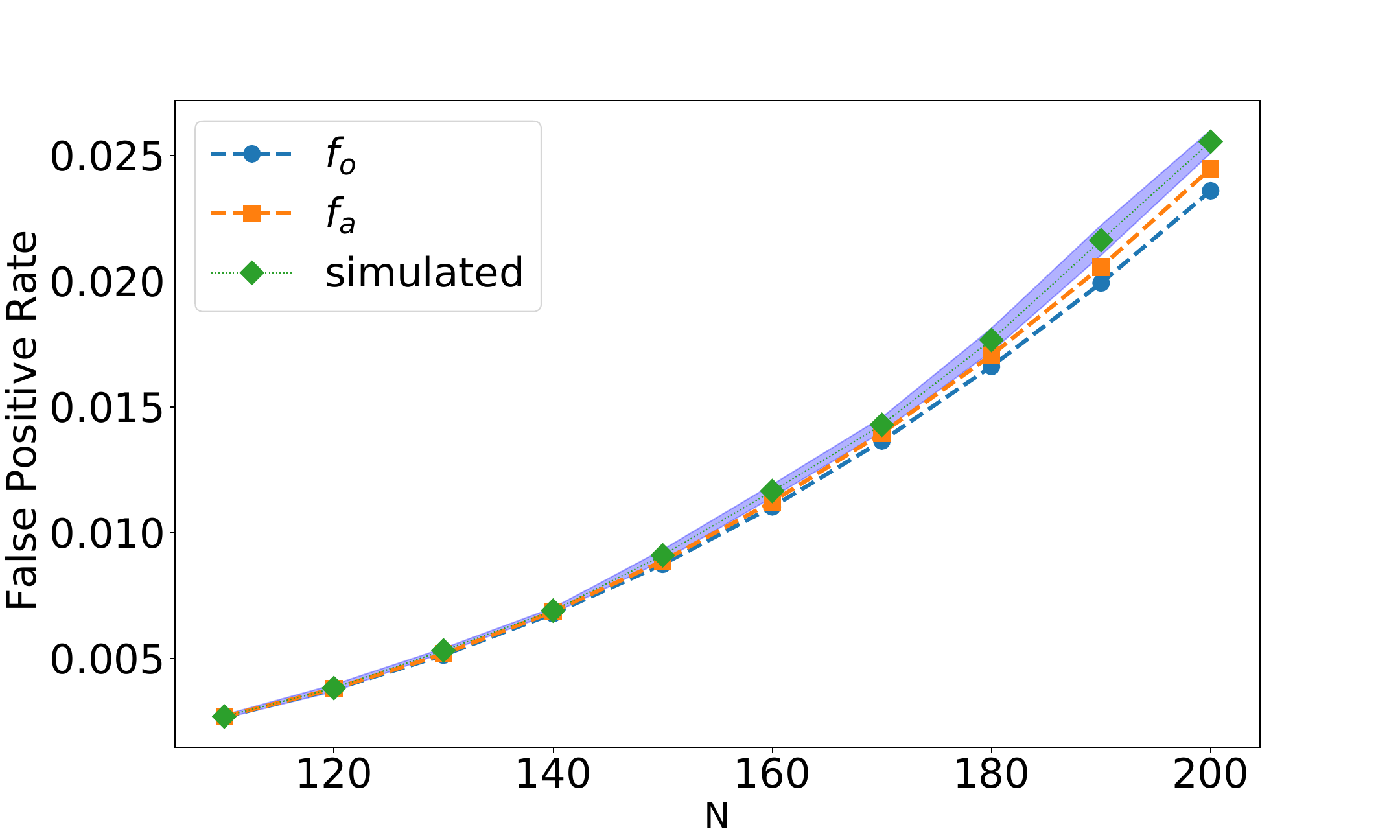}}
\caption{Verifying $f_o, f_a$ are lower bounds for average False Positive rate through simulation.}

\label{fig:avereageNLowerBound}
\end{figure}

 
\subsubsection{Lower bounds on False Positive rate}

We turn next to verifying the accuracy of $f_{a}$ and $f_{o}$.  For this, we ran $14$ simulation epochs of $1,000,000$ messages each.  Fig. \ref{fig:avereageNLowerBound} shows the simulation results plotted against $f_{a}$ and $f_{o}$, with the $99$\% confidence interval shown.  For all data points, both $f_{o}$ and $f_{a}$ are below the sample mean, and fall either within or below the confidence interval.  This demonstrates strong evidence in favor of the accuracy of the lower bounds $f_{o}, f_{a}$.  Observe that $f_{a}$ is a ``tighter'' lower lower bound than $f_{o}$.

\subsection{Expected Message Capacity Comparison}
\label{expected_number}

A natural question to ask when optimizing RBF parameters is the expected maximum number of messages one can store while staying under an acceptable average False Positive rate-- we refer to this as the {\em expected message capacity}.  For the one-phase $\sigma$-bounded model (False Positive rate given by $f_{\sigma}^{1}$),  recall this is $E[N_0]$ derived in \S\ref{sec:expectedN}.  We can also compute an expected message capacity using $f_{a}$ or $f_{\omega}$.   Fig.   \ref{fig:compareVaryM} compares the expected message capacity between the three models.  For each model, we fix an average False Positive rate of $.01$. For each value of $M$, we find the maximum number of messages subject to the False Positive constraint.  We plot this number normalized to $f_{\sigma}^{1}$, as this is the model that yields the highest message capacity.  $f_{a}$ is seen to be a good approximation,  a direct consequence of the tightness of the lower bounds discussed in \S\ref{sec:n-bounded}.  Observe that $f_{\omega}$ can lead to an overly conservative estimate of message capacity in this context; for the given parameters its maximum message capacity is  consistently reduced by more than 30\%. 

\begin{figure}[htbp!]
\centerline{\includegraphics[width=\columnwidth]{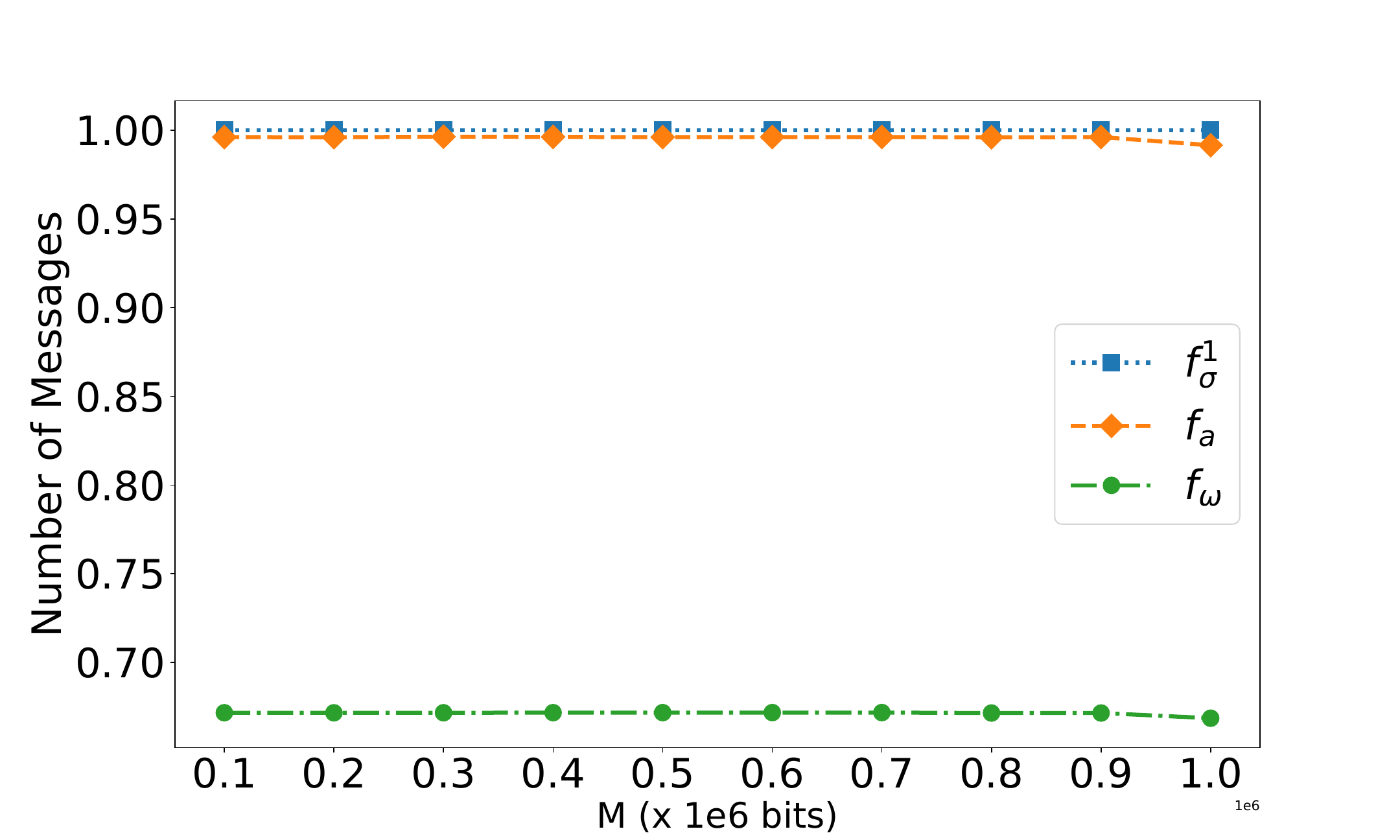}}
\caption{Expected message capacity of $f_w, f_a$ normalized by $f_{\sigma}^{1}$ message capacity, for different values of $M$.}

\label{fig:compareVaryM}
\end{figure}

\begin{figure}[htbp!]
\centerline{\includegraphics[width=\columnwidth]{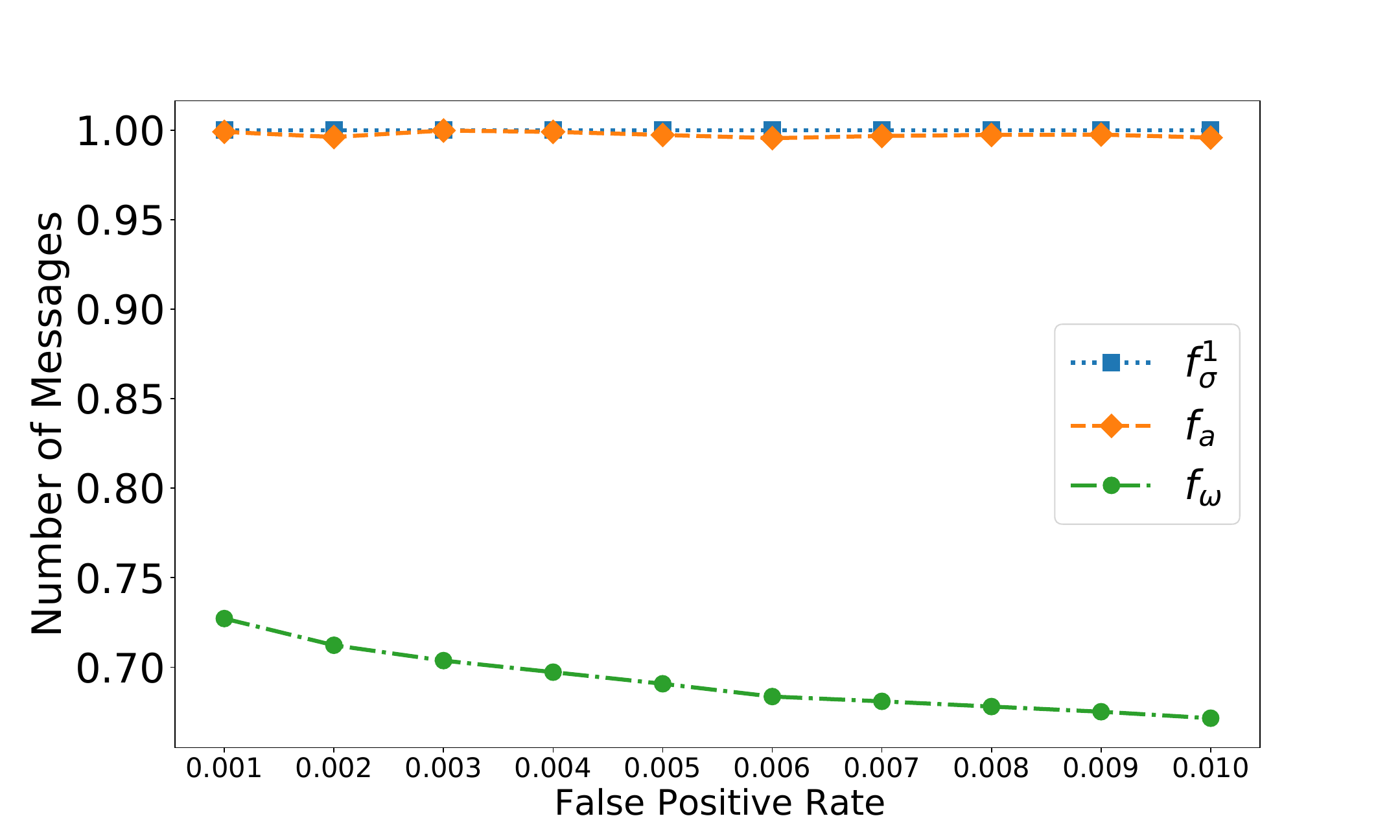}}
\caption{Expected message capacity of $f_w, f_a$ normalized by $f_{\sigma}^{1}$ message capacity, for different False Positive rates.}
\label{fig:compareVarySigma}
\end{figure}

Fig. \ref{fig:compareVarySigma}. shows similar results, this time varying False Positive rates on the x-axis.  Once again, the worst-case $N$ model sees a consistent underestimation of maximum message capacity of around $30$\%.  

We can easily conclude that, for applications that are more concerned with average False Positive rates than worst-case ones, $f_{\omega}$ not an ideal metric, as compared to $f_{\sigma}^{1}$ or $f_{a}$.



\subsection{Tradeoff between One and Two-phased RBF}

In Fig. \ref{fig:oneTwoCompM}, we compare the False Positive rates of one-phase RBFs ($f_{\sigma}^{1}$) versus their two-phase counterparts ($f_{\sigma}^{2}$). 
  We have a given filter memory $M$ (which in the case of the two-phase RBF is split evenly between both filters) and False Positive rate.  We then find the value of $k$  which allows us the maximum number of expected messages for these constraints. In both cases, we observe the result that the one-phase RBF appears to outperform the two-phased filter when it comes to squeezing out extra filter bit capacity.  One might be moved to ponder what is the point of the added complexity of a two-phased filter, if it apparently is ``less efficient'' in terms of average False Positive rate?  The answer becomes apparent when we consider the idea of {\em False Negatives}  as we previously mentioned.  While a detailed investigation of False Negatives is beyond the scope of this paper, the intuition is that the extra bit capacity seemingly afforded by the one-phase filter comes with an increased probability of False Negative rates, which are mitigated by the two-phase filter.

\begin{figure}[htbp]
\centerline{\includegraphics[width=\columnwidth]{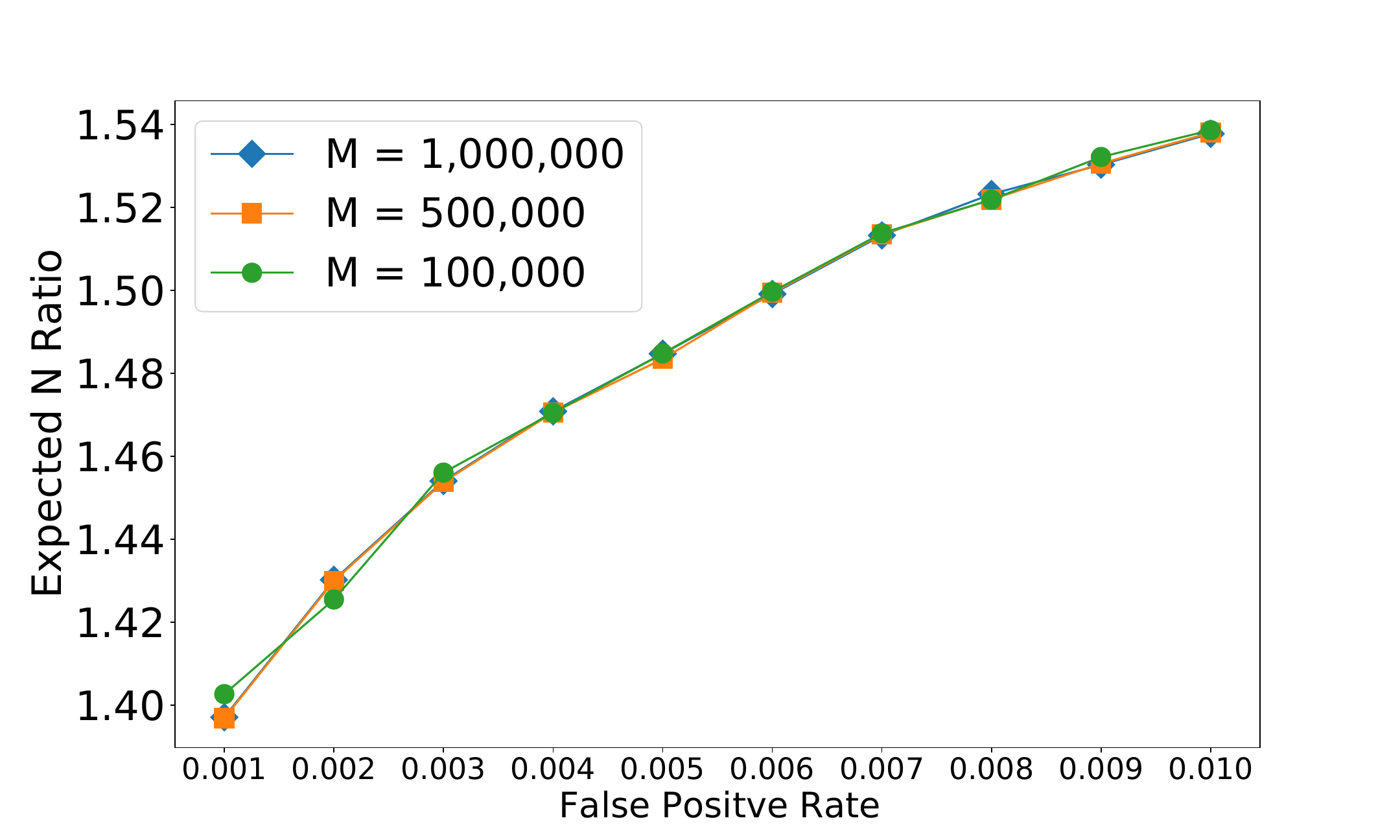}}
\caption{Expected message capacity ratio of a one-phase RBF to a two-phase, for different values of $M$ and False Positive rates.}
\label{fig:oneTwoCompM}
\end{figure}

\section{Related Work}
\label{sec:related}
Luo {\em et al.} provides a general summary of Bloom Filters and their variants, along with their applications to problems in computing \cite{luo2018optimizing}.  To the best of our knowledge, all previous analyses of Bloom Filters approach the question of False Positive rates from the perspective of a ``worst case'' bound.  Equation \eqref{eq:wikiform} is widely used in most applications to compute the False Positive rate according to this metric.  Two prior works  have shown this equation is slightly inaccurate, and in fact a lower bound for the true ``worst case'' False Positive rate \cite{bose2008false,christensen2010new}.  



There have been other BF variants proposed that consider the need to periodically remove items from the filter.  The Counting Bloom Filter (CBF) \cite{851975} and similar variants support deletion of individual elements.  
However, the CBF and its variants all come with increased space and algorithmic overhead to support deletion. Among similar lines, variants such as the Deletable Bloom Filter \cite{rothenberg2010deletable}, the Ternary Bloom Filter \cite{lim2016ternary} and the Quotient Filter \cite{quotientfilters} support element deletion under restricted circumstances, also at the cost of increased space and algorithmic complexity. 

The idea of periodically resetting a targeted subset of BF bits is first proposed in Donnet \textit{et al.} \cite{donnet2006retouched}; while this method is shown to reduce the overall False Positive rate,  it also comes at the cost of increased algorithmic complexity. More recently, Cuckoo Filters have been used as an extension of Bloom filters. Cuckoo filters use Cuckoo hashing to optimize space utilization. These filters notably offer the ability to delete elements post-insertion without practical overhead \cite{fan2014cuckoo}. However, this requires knowledge of the elements that necessitate deletion, a condition not commonly met in many networking applications. 
In addition, unlike RBFs, Cuckoo filters possess a finite unique message capacity.  This inherently limits the scope of its usage under conditions where an unknown and continuous influx of elements is anticipated, which are common in many networking applications.

Many applications employ the RBF strategy as a low-cost alternative to deal with the need to remove elements periodically. Akamai deploys a two-phase RBF approach to guarantee that any content that ends up being cached in their edge servers has been requested at least twice within a designated time frame \cite{maggs2015algorithmic}. Bloom Filter Routing (BFR) has also been introduced in Information-Centric Networks (ICN) to simplify the process of content discovery across the networks \cite{marandi2017bfr} and in wireless networking where Trindade {\em et al.} have designed \textit{Time Aware Bloom Filter} to only remove specific bits that have not been ``hit'' during a predefined time window \cite{trindade2011hran}.


\section{Conclusion}
\label{sec:concl}

Bloom Filters and their variants are a space-efficient data structure employed widely in all manner of computing applications.  Their space efficiency comes with the tradeoff of potential False Positives, and as such much work has been dedicated towards detailed False Positive analysis.  Yet all this work approaches the question of False Positives from the perspective of a ``worst-case'' bound.  This bound is overly conservative for the majority of applications that use Bloom Filters, as it does not take into account the actual state of the Bloom Filter after each arrival.  In fact, applications that use Bloom Filters often have to periodically ``recycle'' the filter once an allowable number of messages threshold has been exceeded.  In cases such as these, different metrics such as the long-term average False Positive rates across new arrivals may be of more interest than a worst-case bound.

We derive a method to efficiently compute the long-term average False Positive rate of a Bloom Filter that periodically ``recycles'' itself (termed a Recycling Bloom Filter).  We use renewal and Markov models to respectively derive lower bound exact expressions for the long-term average False Positive rates, and apply our model to the standard Recycling Bloom Filter, and a ``two-phase'' variant that is popular in network applications.  We demonstrate that the previous worst-case analysis of False Positives can lead to a reduction in the efficiency of RBFs in certain scenarios.  

\bibliographystyle{abbrv}

\bibliography{ourbib}

\onecolumn{
\appendix

\subsection{Count Instance Most Stringent}
\label{sec:count-instance-proof}
\newtheorem{lma}{Lemma}
\newtheorem{cor}{Corollary}
\newtheorem{claim}{Claim}
\newtheorem{thm}{Theorem}

In this appendix, we show that the false positive where we
consider repeat arrivals is upper bounded by the false positive where
we do not consider repeat arrivals (i.e., the analytical model used in the main body of the paper).  To
be clear, let $A$ be a count of all arrivals, where $N$ of them are
what we call {\em non-repeat}: the first time that the particular
element arrives.  We say an arrival triggers a false positive if:

\begin{itemize}

\item It is non-repeat and doesn't set any bits
\item it is a repeat and did not set any bits the first time it
  arrived as a non-repeat (it definitely doesn't set bits in
  subsequent arrivals, but is again considered a false positive based
  on its behavior when it first arrives).
  \end{itemize}

  Let $F$ be the number of false-positives over both types of arrivals and $B$
  be the number of false-positive non-repeat arrivals.  We are claiming
  that over time:
  $$\frac{B}{N} \ge  {\mathcal H} \frac{F}{A}$$
  where ${\mathcal H}$ is a small constant (discussed below).
  
  First, consider an arrival process of elements, where we index
  elements in the order of their first arrival, calling the $i$th such
  element the $i$th {\em non-repeat}.  We break the arrival process
  into {\em intervals}, where the $i$th interval ends with the arrival
  of the $i$th non-repeat.  The first interval contains one arrival:
  the first element.  The second interval ends with the arrival of a
  second unique element, preceded by repeats of that first non-repeat
  element, etc.

 \subsubsection{Useful R.V.s}

 We define several r.v.s.  
 \begin{itemize}

   \item Define $h_i$ to be an indicator of the $i$th non-repeat
     occurring during a given renewal iteration (i.e., the BF has not
     filled and reset prior to the $i$th non-repeat arriving, such
     that the $i$th interval does not occur during that iteration). 

     Note that if $h_i = 0$, then the $i$th non-repeat never actually
     arrived.  However, one can imagine not resetting the bloom filter
     and observing what {\em would have happened} to subsequent
     arrivals had the bloom filter not been reset.  This is relevant
     for the next three r.v. definitions for purposes of defining them
     as independent from $h_i$.

     \item Define $\gamma_i$ to be an indicator that equals 1 when the
       $i$th interval  happens when it is assumed that the $i-1$st interval
       happened (i.e., $P(\gamma_i=1) = P(h_i=1 | h_{i-1}=1)$.  Note we can define $h_i = \prod_{k=1}^i \gamma_k$.

     \item Define $f_i$ to be an indicator that the $i$th non-repeat
       was (or would have been, if the BF were not reset) a false
       positive.  For the ``would have been'' case, one can envision
       when it comes time to resetting the bloom filter, to allow it
       to continue to run just to explore what would have occurred
       with the remaining non-repeat arrivals, and we let $f_i$
       capture this result.

       \item Define $a_{i,j}$ to be the number of times non-repeat $i$
         appears (or would have appeared) in
 the $j$th interval.  Note that $a_{i,j} = 0$ for
 $i < j$, i.e., the $i$th non-repeat cannot appear in an interval
 before its first arrival, and that $a_{i,i} = 1$, i.e., the $i$th
 non-repeat arrives and the $i$th interval ends.
\end{itemize}

\subsubsection{False Positive Formulae}
\newcommand{\fp}{{\mathcal F}}

Define $\fp_n$ to be the false positive rate ignoring repeats, and
$\fp_r$ to be the false positive rate including repeats.  The false
positive rates for a single iteration of filling the B.F. are:

\begin{eqnarray}
  \fp_n & = & \frac{\sum_i h_i f_i}{\sum_i h_i}\\
  \fp_r & = & \frac{\sum_i \sum_{j \ge i}f_i  h_j a_{i,j}}{\sum_i
              \sum_{j\ge i}   h_j a_{i,j}}
\end{eqnarray}

Renewal theory and the law of sums of expectations gives us that the long term rates are simply:
\begin{eqnarray}
  \fp_n &=& \frac{E[\sum_i h_i f_i]}{E[\sum_i h_i]}  = \frac{\sum_iE[ h_i f_i]}{\sum_i E[h_i]}\\
  \fp_r &=&  \frac{E[\sum_i \sum_{j \ge i} E[f_i  h_j a_{i,j}]}{
              E[\sum_i
              \sum_{j\ge i}  h_j a_{i,j}]}\\
              &=&
              \frac{\sum_i \sum_{j \ge i} E[f_i  h_j a_{i,j}]}{
              \sum_i
              \sum_{j\ge i}  E[h_j a_{i,j}]}
\end{eqnarray}

\subsubsection{Strict increase/decrease results of R.V.s}

\begin{lma}
  \label{lma:monotonic}
  $E[h_i] \ge E[h_{i+1}]$ and $E[f_i] \le E[f_{i+1}]$.
\end{lma}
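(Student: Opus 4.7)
The plan is to prove the two inequalities separately, each by a short monotonicity argument that exploits structure already built into the definitions.

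For $E[h_i] \ge E[h_{i+1}]$, I would observe that the definition $h_i = \prod_{\ell=1}^i \gamma_\ell$ given just above the lemma immediately yields $h_{i+1} = h_i \cdot \gamma_{i+1}$. Since each $\gamma_\ell$ is an indicator taking values in $\{0,1\}$, this gives $h_{i+1} \le h_i$ on every sample path, and taking expectations yields the first inequality with no further work.

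For $E[f_i] \le E[f_{i+1}]$, I would set up a monotone coupling on the hypothetical never-reset filter that underlies the paper's ``would have been'' extension of $f_i$. Let $B_i$ denote the number of bits set in this never-reset Bloom Filter immediately before the $i$th non-repeat is hashed. Each non-repeat arrival can only add bits, never remove them, so $B_{i+1} \ge B_i$ pointwise. Moreover, the $k$ hashes of the $i$th non-repeat are independent of $B_i$, since $B_i$ is determined entirely by the hashes of the first $i-1$ non-repeats. Conditional on $B_i = b$, the event $f_i = 1$ is precisely that all $k$ hashes of a fresh message land among the $b$ already-set bits, which has probability $\rho(b)$ as defined in (\ref{eq:rho-def}).

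Because $\rho(b)$ is non-decreasing in $b$ in both variants --- obviously so for the colliding form $(b/M)^k$, and true as well for the non-colliding form $\binom{b}{k}/\binom{M}{k}$ --- the pointwise inequality $B_i \le B_{i+1}$ lifts to $\rho(B_i) \le \rho(B_{i+1})$ pointwise. Taking expectations and using independence of the $i$th non-repeat's hashes from $B_i$ (and similarly for $i+1$) gives $E[f_i] = E[\rho(B_i)] \le E[\rho(B_{i+1})] = E[f_{i+1}]$. The only (very mild) obstacle is making sure $f_i$ and $B_i$ are well-defined even on sample paths where the $i$th non-repeat never actually arrives before a reset in the real system; the paper's convention of continuing to run the hashes on a hypothetical never-reset filter sidesteps this and lets the coupling apply uniformly across all sample paths.
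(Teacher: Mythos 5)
Your proposal is correct and follows essentially the same route as the paper: the first inequality via the pointwise relation $h_{i+1} \le h_i$ (the paper phrases this as $h_i = 0 \Rightarrow h_{i+1} = 0$), and the second via the observation that the $(i+1)$st non-repeat sees a filter with at least as many bits set and that the false-positive probability is non-decreasing in the number of set bits. Your version merely makes the coupling and the role of $\rho$ explicit, which is a welcome tightening but not a different argument.
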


\begin{proof}
The former follows from the fact that in any sample path, $h_i = 0
\rightarrow h_{i+1} = 0$.  The latter follows from the fact that the
$i+1$st non-repeat arrival arrives at a bloom filter that is no less
filled than what the $i$th non-repeat arrives to, and that the
probability of a non-repeat being a false positive is an increasing
function of the number of bits filled in the bloom filter.
\end{proof}

\begin{claim}
  \label{lma-gamma}
  $E[\gamma_i] \ge E[\gamma_{i+1}]$.
\end{claim}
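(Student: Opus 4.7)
The plan is to recast the claim as a log-concavity statement about an auxiliary survival sequence. Working in the hypothetical non-resetting process introduced earlier, let $b_j$ denote the number of bits set after the $j$th non-repeat is hashed; repeats set no bits and so are irrelevant to resets. Then $h_i = 1$ if and only if $b_{i-1} \le \sigma$, and since the process $(b_j)$ is monotonically non-decreasing we have $\{b_{i-1} \le \sigma\} \subseteq \{b_{i-2} \le \sigma\}$. Setting $p_j := P(b_j \le \sigma)$, this yields
$$E[\gamma_i] = P(h_i = 1 \mid h_{i-1} = 1) = p_{i-1}/p_{i-2},$$
so the claim $E[\gamma_i] \ge E[\gamma_{i+1}]$ is equivalent to $p_{i-1}^2 \ge p_{i-2}\, p_i$, i.e., log-concavity of the sequence $(p_j)$.

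Next, rewrite the ratio as $p_{j+1}/p_j = E[G(b_j) \mid b_j \le \sigma]$, where $G(b) := \sum_{c \le \sigma} \tau_k(b,c)$ is the probability that one more non-repeat does not push the bit count above $\sigma$. A coupling of the $k$ hash operations using identical hash values shows that $G$ is non-increasing in $b$: starting from a superset of bits set can only yield a superset after hashing, so a larger starting state has weakly smaller probability of remaining below $\sigma$. It therefore suffices to prove that the conditional distribution $f_j := \mathcal{L}(b_j \mid b_j \le \sigma)$ is stochastically non-decreasing in $j$; combined with monotonicity of $G$, this gives $E_{f_j}[G] \le E_{f_{j-1}}[G]$, which is exactly $p_{j+1}/p_j \le p_j/p_{j-1}$.

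The crux is therefore the stochastic monotonicity of $f_j$. The intended route is to show that the transition kernel $\tau_k(\cdot,\cdot)$ is totally positive of order 2 (TP2), meaning $\tau_k(x_1,y_1)\,\tau_k(x_2,y_2) \ge \tau_k(x_1,y_2)\,\tau_k(x_2,y_1)$ for all $x_1 \le x_2$ and $y_1 \le y_2$. Given TP2, a classical composition theorem of Karlin implies that $\{P(b_j = c)\}$ is TP2 jointly in $(j,c)$, i.e., the monotone-likelihood-ratio property holds in $j$. Since MLR is preserved under truncation to an initial segment $\{0,\ldots,\sigma\}$, the truncated distributions satisfy $f_j \succeq f_{j-1}$ in the usual stochastic order. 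The main obstacle is verifying TP2 of $\tau_k$ for general $k$; the recursive expression for $\tau_k$ in equation \eqref{eq:nonret} invites an inductive argument, showing that the ``one-more-hash'' update preserves TP2. A direct coupling-based proof of the monotonicity of $f_j$, bypassing TP2 entirely (e.g., by coupling two runs of the chain offset by one time step and tracking how joint survival-conditioning biases each run's state), would be the fallback if the inductive route proves unwieldy.
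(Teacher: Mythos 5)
First, some context you could not have known: the paper does not actually prove this claim. The authors explicitly write that they ``have yet to formally prove this result,'' and offer only an intuitive sketch, namely that the bit-count $X_i$ after the $i$th non-repeat is stochastically increasing in $i$, so later intervals are ``more likely to trigger a reset.'' Your reduction exposes precisely why that sketch is insufficient: writing $p_j = P(b_j \le \sigma)$ and using the nesting of the survival events, the claim $E[\gamma_i] \ge E[\gamma_{i+1}]$ is equivalent to $p_{i-1}^2 \ge p_{i-2}\,p_i$, i.e., \emph{log-concavity} of $(p_j)$, whereas stochastic monotonicity of $b_j$ only yields that $(p_j)$ is decreasing. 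Your decomposition $p_{j+1}/p_j = E[G(b_j) \mid b_j \le \sigma]$, the coupling argument that $G$ is non-increasing, and the observation that everything then reduces to stochastic monotonicity of the conditioned laws $f_j$ are all correct, and this is a genuinely different and more principled route than the paper's.

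That said, your proposal is a proof plan rather than a proof, and the gap sits exactly where you flagged it: the TP2 property of $\tau_k(\cdot,\cdot)$ is asserted as ``the main obstacle'' and never verified, and without it the Karlin composition step, the MLR ordering of the marginals, and hence the stochastic ordering of the truncated laws $f_j$ all remain conditional. Two technical points to watch if you pursue it. First, the state space of the hypothetical non-resetting chain extends up to $M$, not $\sigma$; you should either verify TP2 on the full reachable range or, more cleanly, work with the sub-stochastic kernel killed upon exceeding $\sigma$ (TP2 is a determinant condition and does not require stochasticity), so that only the block $\{0,\dots,\sigma\}^2$ matters and $f_j$ is directly the normalized law of the killed chain. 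Second, the kernel has a banded support pattern ($\tau_k(i,j)=0$ unless $i \le j \le i+k$), so the usual shortcut of checking only adjacent $2\times 2$ minors needs justification in the presence of these zeros; a direct check for $k=2$ colliding reduces to inequalities like $(2i+1)(2i+3) \ge (i+1)^2$, which is encouraging, but the induction on $k$ via equation \eqref{eq:nonret} is where the real work lies. In short: your reduction is sound and sharper than the paper's own reasoning, but the claim remains unproven until TP2 (or your fallback one-step-offset coupling of the two conditioned chains) is actually established.
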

We have yet to formally prove this result, though simulation indicates it holds.  An intuitive "proof" is presented below, although we admit this is not sufficiently formal to claim it has been rigorously shown.
\begin{proof}
  Define $X_i$ to be the number of bits set in the B.F. after the
  $i-1$st arrival whenever it occurs.  It can be shown (and is
  somewhat intuitive) that $P(X_i > \ell) \le P(X_{i+1} > \ell)$ for
  any $\ell$, such that the latter is more likely to trigger a reset,
  making an $i+1$st interval following an $i$th less likely than an
  $i$th interval following an $i-1$st.
  \end{proof}

 \begin{lma}
   \label{lma:underlying}
   For any underlying i.i.d, distribution, $E[a_{i,j}] \ge
   E[a_{i+1,j}]$
 \end{lma}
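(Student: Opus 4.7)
Let $K_\ell$ denote the identity of the $\ell$th non-repeat, let $p_m$ be the underlying i.i.d.\ mass of element $m$, and write $S_\ell=p_{K_1}+\cdots+p_{K_\ell}$. My plan is to reduce the lemma to the intuitively clear statement that, in the discovery order of an i.i.d.\ sequence, elements with larger underlying mass tend to appear earlier---so $p_{K_i}$ stochastically dominates $p_{K_{i+1}}$---and then convert this dominance into the desired comparison of expected occupancies in interval $j$. The trivial ranges come first: for $j<i$ both sides vanish, and for $j=i$ we have $a_{i,i}=1\ge a_{i+1,i}=0$, so the substantive range is $j\ge i+1$.

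\textbf{Core argument for $j\ge i+2$.} Condition on the tuple $(K_1,\ldots,K_{j-1})$. During the $j$th interval the non-terminating arrivals form a geometric$(1-S_{j-1})$ number of i.i.d.\ draws, each equal to $K_i$ with probability $p_{K_i}/S_{j-1}$, which yields
\[
E[a_{i,j}\mid K_1,\ldots,K_{j-1}]=\frac{p_{K_i}}{1-S_{j-1}},
\]
and the analogous identity for $a_{i+1,j}$. Because $S_{j-1}$ depends only on the unordered set $T=\{K_1,\ldots,K_{j-1}\}$, it is enough to show $E[p_{K_i}\mid\mathcal{C}]\ge E[p_{K_{i+1}}\mid\mathcal{C}]$, where $\mathcal{C}$ fixes $T$ together with the positions of every non-repeat other than $K_i,K_{i+1}$. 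Letting $a,b$ be the two unassigned candidates, the discovery-process factors at positions $\ne i,i+1$ cancel between the two orderings, leaving
\[
\frac{\Pr[(K_i,K_{i+1})=(a,b)\mid\mathcal{C}]}{\Pr[(K_i,K_{i+1})=(b,a)\mid\mathcal{C}]}=\frac{1-S_{i-1}-p_b}{1-S_{i-1}-p_a},
\]
which is $\ge 1$ exactly when $p_a\ge p_b$. Consequently $E[p_{K_i}-p_{K_{i+1}}\mid\mathcal{C}]=\bigl(\Pr[(a,b)\mid\mathcal{C}]-\Pr[(b,a)\mid\mathcal{C}]\bigr)(p_a-p_b)/\bigl(\Pr[(a,b)\mid\mathcal{C}]+\Pr[(b,a)\mid\mathcal{C}]\bigr)\ge 0$, and dividing by the fixed factor $1-S_{j-1}$ and unconditioning closes the case $j\ge i+2$.

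\textbf{Main obstacle: the $j=i+1$ boundary.} I expect this to be the sticking point. There $a_{i+1,i+1}=1$ deterministically, while $E[a_{i,i+1}]=E[p_{K_i}/(1-S_i)]$, which for a near-uniform distribution on $M$ symbols is of order $1/M$ and therefore far below $1$---so the pointwise inequality can actually fail in general. I plan to address this in one of two ways: either (i) tighten the lemma's hypothesis to $j\ge i+2$, which is already sufficient for the telescoping used in the downstream $F/A$ comparison that motivates the lemma, or (ii) keep a pointwise statement only for $j\ge i+2$ and additionally prove the weaker aggregate inequality $\sum_{j\ge i+1}E[h_j a_{i,j}]\ge\sum_{j\ge i+1}E[h_j a_{i+1,j}]$, using the size-biased dominance above together with Lemma~\ref{lma:monotonic} (which gives $E[h_j]$ non-increasing in $j$) to show that the surplus contributed by $j\ge i+2$ overwhelms the $j=i+1$ deficit. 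Option~(i) is my fallback; (ii) is what I would try first if the pointwise form must be preserved as stated.
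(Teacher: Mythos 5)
Your argument is correct on the range where it applies, but it takes a genuinely different route from the paper. The paper's proof is a sample-path swapping argument: it maps each realization with $a_{i,j}<a_{i+1,j}$ to one with the inequality reversed (swapping either the first arrivals or all subsequent arrivals of the $i$th and $i{+}1$st non-repeats, depending on which is more popular) and argues the image path is at least as likely. Your proof instead conditions on the discovery history, uses the exact identity $E[a_{i,j}\mid K_1,\ldots,K_{j-1}]=p_{K_i}/(1-S_{j-1})$, and reduces the claim to a size-biased-permutation exchangeability computation showing that the more popular of two candidates is conditionally more likely to occupy the earlier discovery position. Your route is the more rigorous of the two: it yields a clean quantitative ratio $\bigl(1-S_{i-1}-p_b\bigr)/\bigl(1-S_{i-1}-p_a\bigr)$ rather than an informal bijection whose two cases are defined path-by-path and whose images are not obviously disjoint, and it makes explicit exactly which quantities are held fixed when the comparison is made. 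What the paper's approach buys is brevity and an argument that does not require writing down the discovery-order law; what yours buys is an actual proof.

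Your flagged obstacle at $j=i+1$ is a genuine defect in the statement, not a gap in your argument: with $a_{i+1,i+1}=1$ deterministically and $E[a_{i,i+1}]=E\!\left[p_{K_i}/(1-S_i)\right]\approx 1/(M-i)$ for a uniform distribution on $M$ symbols, the pointwise inequality fails. The paper's own proof silently skips this case --- its uniform-distribution remark is explicitly restricted to $j>i+1$, and the swap bijection implicitly assumes both elements' first arrivals precede interval $j$ --- so the lemma as written is only established (and only true) for $j\le i$ and $j\ge i+2$. Your proposed repair (i), restating the lemma for $j\ge i+2$ and handling the boundary term separately in the downstream comparison of $E[\alpha_i]$ and $E[\alpha_{i+1}]$, is the right fix; note that the downstream term-pairing in the paper is itself index-shifted and would need to be rechecked against whichever restricted form of the lemma is adopted.
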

\begin{proof}
 The claim says that the later non-repeat arrivals don't appear more
 often (in expectation) during any interval than earlier non-repeat
 arrivals.  This is clearly true for $j=i$ since $E[a_{i+1, i}]=0$.
 It is also clearly true for uniform distributions since, after
 occurring, the expected number of repeats for two elements that have
 already occurred is the same (i.e., for uniform, $E[a_{i,j}] =
 E[a_{i+1,j}]$ whenever $j>i+1$.

 For more general distributions,  
 a sketch of the proof is via sample-path analysis, where we focus on non-repeats $i$ and $i+1$, and consider any sample path in which $a_{i,j} < a_{i+1,j}$.  We will show that for such sample path, there is a 1-to-1 and onto mapping to another sample path in which $a_{i,j} > a_{i+1,j}$ and has a greater or equal likelihood of occurring.  Let $x$ be the $i$th non-repeat and $y$ the $i+1$st non-repeat.

 For the first case, consider when $x$ is less popular than the $y$.  In this case, we map sample path $P$ to an alternate path $P'$ by swapping the first arrivals of $x$ and $y$.  Clearly $P'$ has equal likelihood of $P$, since we simply changed the order of two arrivals in an independently drawn sequence.  Also, since their relative initial arrivals have changed order, the resulting $P'$ satisfies $a_{i,j} > a_{i+1,j}$.  The mapping is clearly a bijection since the inverse operation is to re-apply the swap on the $i$th and $i+1$st non-repeat arrivals, reverting $P'$ to $P$.

 For the case where $x$ is more popular than $y$, we keep the initial arrivals fixed but swap all remaining arrivals.  Since 
 $a_{i,j} < a_{i+1,j}$, there are initially more $y$ than $x$, so the resulting sequence $P'$ has more $x$ than $y$, and since $x$ is more popular, $P'$ has larger likelihood.  Again, this mapping is clearly a bijection, since the reverse mapping is again to again swap the repeat arrivals of $x$ and $y$.

 Since every sample path $P$ where $a_{i,j} < a_{i+1,j}$ can bijectively be mapped to a sample path $P'$ with same or greater likelihood where $a_{i,j} > a_{i+1,j}$, the result holds.
 \end{proof}

 \subsubsection{Independence Results}
 
 Define $g_{i,j} = \prod_{k=i+1}^j \gamma_k$.  This can be thought of as
 an indicator interval $j$ occurring, given interval $i$ occurred.
 
 \begin{lma}
   \label{gamma-lemma}
   The $\gamma_k$ are independent from one another, and  $h_i$ is independent from $g_{i,j}$.
 \end{lma}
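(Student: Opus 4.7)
The plan is to prove mutual independence of the $\gamma_k$ by an explicit construction on an enlarged probability space that leverages the monotonicity $h_k \le h_{k-1}$ of the cycle-containment indicators, and then deduce $h_i \perp g_{i,j}$ as a corollary of disjoint-index support.

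For the construction, I adjoin an independent sequence of Bernoullis $(B_k)_{k \ge 1}$ with $P(B_k = 1) = p_k := P(h_k = 1 \mid h_{k-1} = 1)$ (taking $h_0 \equiv 1$, so $p_1 = 1$), and define $\gamma_k := h_k + (1 - h_{k-1}) B_k$. A case analysis shows $\gamma_k \in \{0, 1\}$: on $\{h_{k-1} = 1\}$ it equals $h_k$, while on $\{h_{k-1} = 0\}$ it equals the fresh coin $B_k$, and a direct check gives the required marginal $P(\gamma_k = 1) = p_k$. The pathwise identity $h_i = \prod_{k=1}^i \gamma_k$ follows from monotonicity: if $h_i = 1$, then $h_k = 1$ for every $k \le i$ and every factor is $1$; if $h_i = 0$, letting $j$ be the first index with $h_j = 0$ we have $h_{j-1} = 1$, so $\gamma_j = h_j = 0$ and the product vanishes.

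For mutual independence I would verify $P(\gamma_k = 1 \mid \gamma_1, \ldots, \gamma_{k-1}) = p_k$ almost surely. Split on the last-factor product $\prod_{i<k} \gamma_i = h_{k-1}$. On $\{h_{k-1} = 1\}$, monotonicity collapses the history to the single event $\{h_{k-1} = 1\}$, and there $\gamma_k = h_k$ has conditional success probability exactly $p_k$ by definition. On $\{h_{k-1} = 0\}$, the history is an intersection of a BF-sample-path event and an event in the auxiliary coins $B_{j^{*}+1}, \ldots, B_{k-1}$ (where $j^{*}$ is the first index with $\gamma_{j^{*}} = 0$); since $\gamma_k = B_k$ is independent of both, the conditional probability again equals $p_k$. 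Either way the conditional probability is the constant $p_k$, so $\gamma_k$ is independent of $(\gamma_1, \ldots, \gamma_{k-1})$, and iterating yields mutual independence. The claim $h_i \perp g_{i,j}$ then follows because $h_i$ is a function of $\{\gamma_1, \ldots, \gamma_i\}$ and $g_{i,j}$ a function of $\{\gamma_{i+1}, \ldots, \gamma_j\}$, and these index sets are disjoint.

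The workhorse throughout is the monotonicity $h_k \le h_{k-1}$, which is what underwrites the pathwise product identity and also collapses the conditioning on the full history to the single condition on $h_{k-1}$. Without monotonicity, the fresh-Bernoulli trick would still give the right marginals, but the pathwise product identity would fail and the conditional calculation would leak history-dependent information into $p_k$. The main difficulty in formalizing the proof therefore lies in cleanly discharging the two-case reduction via monotonicity; the independence claim itself then falls out almost mechanically from the construction.
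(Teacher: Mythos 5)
Your proof is correct, and it reaches the same conclusion the paper asserts, but it does genuinely more work than the paper's own argument. The paper's proof is a one-sentence appeal to the definition: each $\gamma_k$ "is defined such that its value is set under the assumption that $h_{k-1}=1$," hence cannot depend on $\gamma_m$ for $m<k$. That is an assertion, not a verification --- the paper only ever specifies $\gamma_k$ by a marginal probability $P(\gamma_k=1)=P(h_k=1\mid h_{k-1}=1)$, and under the "would have happened" counterfactual reading, independence from the history is not obvious. Your contribution is to actually construct random variables realizing that specification: adjoining fresh Bernoullis $B_k$ and setting $\gamma_k = h_k + (1-h_{k-1})B_k$, you get the correct marginals (via $P(h_k=1)=p_k P(h_{k-1}=1)$, which uses monotonicity $h_k\le h_{k-1}$), the pathwise identity $h_i=\prod_{k\le i}\gamma_k$, and a clean conditional-probability computation splitting on $h_{k-1}=\prod_{m<k}\gamma_m$; the disjoint-index argument for $h_i \perp g_{i,j}$ then follows. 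The one point worth making explicit is that your $\gamma_k$ differs from the paper's counterfactual one on the event $\{h_{k-1}=0\}$, where you substitute a fresh coin; this is harmless because everything downstream (the factorization $h_j = h_i\, g_{i,j}$, the corollary $E[h_j]=E[h_i]E[g_{i,j}]$, and the sums in the main theorem) depends on the $\gamma_k$ only through $h_i$ and $g_{i,j}$, whose joint distribution your construction reproduces. In short: same underlying idea as the paper, but you supply the construction the paper's proof implicitly presupposes, which is where the actual mathematical content of the lemma lies.
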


 \begin{proof}
   Both of these are is by the definition of the
   $\gamma_k, i < k \le j$ that comprise $g_{i,j}$.  Each $\gamma_k$
   is defined such that its value is set under the assumption that
   $h_{k-1} = 1$, so its likelihood of equalling 1 does not actually
   depend on $h_{k-1}$, hence does not depend on $\gamma_m$ for $m<k$.
 \end{proof}

 \begin{cor}
   $E[h_j] = E[h_i] E[g_{i,j}]$
  \end{cor}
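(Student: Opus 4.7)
The plan is to combine the product-form definitions of the relevant indicator variables with the independence result just proved in Lemma~\ref{gamma-lemma}. Recall that $h_i = \prod_{k=1}^{i} \gamma_k$ by definition (stated earlier, right after introducing $\gamma_i$), and that $g_{i,j} = \prod_{k=i+1}^{j} \gamma_k$. Since these two products range over disjoint index sets whose union is $\{1,\dots,j\}$, they compose multiplicatively.

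First I would write $h_j = \prod_{k=1}^{j} \gamma_k = \bigl(\prod_{k=1}^{i} \gamma_k\bigr)\bigl(\prod_{k=i+1}^{j} \gamma_k\bigr) = h_i \cdot g_{i,j}$, which is a pure algebraic identity on every sample path. Taking expectations of both sides gives $E[h_j] = E[h_i \cdot g_{i,j}]$.

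Second, I would invoke Lemma~\ref{gamma-lemma}, which asserts that $h_i$ is independent from $g_{i,j}$. Independence of the factors lets me split the expectation of the product into the product of the expectations: $E[h_i \cdot g_{i,j}] = E[h_i]\,E[g_{i,j}]$. Chaining the two equalities yields the claimed identity $E[h_j] = E[h_i]\,E[g_{i,j}]$.

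There is no real obstacle here; the corollary is essentially a one-line consequence of the lemma together with the telescoping product structure of $h_j$. The only thing to be careful about is to note explicitly that the decomposition $h_j = h_i \cdot g_{i,j}$ is valid for the $\{0,1\}$-valued indicators because the index ranges partition $\{1,\dots,j\}$, so that no $\gamma_k$ is double-counted; this is what makes the application of independence legitimate and yields the result directly.
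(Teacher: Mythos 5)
Your proof is correct and is exactly the intended argument: the paper states this as an unproved corollary of Lemma~\ref{gamma-lemma}, relying implicitly on the telescoping decomposition $h_j = h_i \cdot g_{i,j}$ and the independence of $h_i$ and $g_{i,j}$, which is precisely what you spell out.
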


 \begin{lma}
   \label{fi-hi-independence}
   $f_i$ and $h_i$ are independent (for same $i$).
 \end{lma}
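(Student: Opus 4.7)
The plan is to separate the randomness driving $f_i$ into a piece fixed by the past hashes and a piece determined by the $i$th non-repeat's fresh hashes, then exploit the hash-independence property invoked throughout the paper. Let $H_j$ denote the random set of $k$ bit-positions hashed to by the $j$th non-repeat, and let $V_j=\bigcup_{m\le j}H_m$ be the bit-set of the hypothetical non-reset BF after $j$ non-repeats. Then $h_i=I\{|V_{i-1}|\le\sigma\}$ is a deterministic function of $H_1,\ldots,H_{i-1}$, while $f_i=I\{H_i\subseteq V_{i-1}\}$ depends on the same history together with the fresh hashes $H_i$.

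First I would condition on $V_{i-1}$. Because the hashes of distinct non-repeats are independent, $H_i$ is independent of $(H_1,\ldots,H_{i-1})$ and in particular of $V_{i-1}$, so
\[
E[f_i \mid V_{i-1}] \;=\; \rho(|V_{i-1}|),
\]
where $\rho(s)=(s/M)^k$ in the colliding case and $\rho(s)=\binom{s}{k}/\binom{M}{k}$ in the non-colliding case; crucially, this conditional expectation depends on $V_{i-1}$ only through its cardinality. Second, I would apply the tower rule to reduce the claim to an equality of ordinary expectations,
\[
E[f_i h_i] \;=\; E\!\left[h_i\,\rho(|V_{i-1}|)\right], \qquad E[f_i]\,E[h_i] \;=\; E[\rho(|V_{i-1}|)]\,E[h_i],
\]
so independence reduces to $E[h_i\,\rho(|V_{i-1}|)]=E[h_i]\,E[\rho(|V_{i-1}|)]$.

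The main (and in my view only) obstacle is exactly this last factorization: $h_i$ is an indicator on $|V_{i-1}|\le\sigma$ and $\rho(|V_{i-1}|)$ is monotone increasing in $|V_{i-1}|$, so they are a priori negatively correlated when $|V_{i-1}|$ is genuinely random. I would try to overcome this by realizing $f_i$ through an auxiliary randomization that is jointly independent of the entire history $H_1,\ldots,H_{i-1}$: concretely, exploit the fact that, conditional on $|V_{i-1}|=s$, the event $H_i\subseteq V_{i-1}$ can be coupled to an independent Bernoulli$(\rho(s))$ coin constructed from $H_i$ alone, so that $f_i$ decomposes as a function of the fresh $H_i$ after averaging out $V_{i-1}$. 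This fresh-coin decomposition is the step I expect to require the most care; if it cannot be pushed through in full generality, I would fall back to establishing conditional independence $f_i \perp h_i \mid |V_{i-1}|$, which follows immediately from the conditioning argument above and may already suffice for the downstream use of the lemma in bounding $E[f_i h_j a_{i,j}]$ in the main appendix computation.
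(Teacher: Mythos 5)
Your setup is more careful than the paper's own proof, and the obstacle you flag is genuine, but your proposal does not close it: the step you defer is precisely the content of the lemma. Writing $h_i = I\{|V_{i-1}|\le\sigma\}$ and $E[f_i \mid V_{i-1}] = \rho(|V_{i-1}|)$ reduces the claim to $\mathrm{Cov}\bigl(I\{|V_{i-1}|\le\sigma\},\,\rho(|V_{i-1}|)\bigr)=0$; since the indicator is non-increasing and $\rho$ is strictly increasing in $|V_{i-1}|$, this covariance is $\le 0$ by the standard correlation inequality for monotone functions of a single random variable, with equality only when $|V_{i-1}|$ is essentially concentrated on one side of $\sigma$. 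Your proposed rescue --- coupling $f_i$ to a fresh Bernoulli coin built from $H_i$ alone --- cannot work as described, because the coin's success probability $\rho(|V_{i-1}|)$ is itself a function of the shared history, so the coin cannot be made unconditionally independent of $h_i$. The fallback you offer, conditional independence of $f_i$ and $h_i$ given $|V_{i-1}|$, is correct but strictly weaker than the stated lemma, and the downstream computation uses the unconditional factorization $E[h_i f_i]=E[h_i]E[f_i]$ in the numerator of the no-repeat false positive rate, so the weaker statement does not directly substitute.

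For comparison, the paper's proof is a two-line causal argument: $f_i$ is a ``would have happened'' quantity whose value is unaffected by whether the reset occurred, and $h_i$ is determined by events prior to the $i$th non-repeat. This sidesteps rather than resolves the issue you identified --- both variables are functions of the bits set by the first $i-1$ non-repeats, and ``$h_i$ does not causally affect $f_i$'' is not the same as probabilistic independence. Taken to its conclusion, your analysis shows that exact independence holds only in the regimes where $h_i$ is almost surely constant, and that in the transition regime the two variables are negatively correlated. If you want a statement you can actually prove, establish the one-sided bound $E[f_i h_i]\le E[f_i]E[h_i]$ (which your monotonicity observation gives immediately) and check which direction of the appendix's final inequality it supports, rather than asserting equality.
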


 \begin{proof}
$f_i$ is defined as a ``would have happened'' r.v., such that it's
value is unaffected by $h_i$, and $h_i$ depends only on events prior
to the $i$th interval where the $i$th non-repeat first arrives and
determines its false positive ($f_i$) status.
\end{proof}

\begin{lma}
  \label{a-lemma}
  $a_{i,j}$ is independent of $f_i$ and $h_i$, and $g_{i,j}$.
\end{lma}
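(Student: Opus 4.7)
The plan is to identify two independent sources of randomness driving the model, attribute $a_{i,j}$ entirely to one and $(f_i, h_i, g_{i,j})$ entirely to the other, and conclude independence from the product structure of the underlying probability space.

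First, I would formally separate the randomness into: (1) the sequence $S_1, S_2, \ldots$ of element identities drawn i.i.d.\ at each arrival, and (2) an independent family $\{H_e\}$ of hash outputs, one assigned to each distinct element $e$, each an i.i.d.\ draw of $k$ bit indices from the hashing scheme in use. Because hash functions are applied per element and are drawn independently of the arrival distribution, $S$ and $\{H_e\}$ generate independent $\sigma$-algebras. I would then observe that $a_{i,j}$ is $\sigma(S)$-measurable: to compute it, one reads off the indexing of non-repeats from $S$ alone (the first appearance of each distinct element defines its non-repeat index), locates the $j$th interval as the block ending at the $j$th first-appearance, and counts occurrences of the $i$th non-repeat element within that block. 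No hash value is consulted.

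Next, I would show that $f_i$, $h_i$, and $g_{i,j}$ are all determined by the hash outputs $H_1, H_2, \ldots$ of the first $\max(i, j-1)$ non-repeats along the counter-factual bit-state trajectory in which the RBF is never actually reset. The key observation is that repeat arrivals do not alter the bit state of the filter, so the counter-factual bit count recorded after each non-repeat arrival is a deterministic function of the $H_k$'s alone. Under this counter-factual, $f_i$ is the indicator that non-repeat $i$'s hash hits only previously-set bits; $h_i$ is the indicator that the cumulative bit count stays at or below $\sigma$ through non-repeat $i-1$; and $\gamma_k$ is the indicator that the bit count does not exceed $\sigma$ after non-repeat $k-1$, so $g_{i,j} = \prod_{k=i+1}^{j}\gamma_k$ is a function of $H_i, \ldots, H_{j-1}$. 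Each of these three quantities is therefore $\sigma(\{H_e\})$-measurable.

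The independence conclusion is then immediate: $a_{i,j}$ lies in $\sigma(S)$ while the triple $(f_i, h_i, g_{i,j})$ lies in $\sigma(\{H_e\})$, and the two $\sigma$-algebras are independent by construction, yielding joint (hence pairwise) independence of $a_{i,j}$ from $f_i$, $h_i$, and $g_{i,j}$. The main obstacle I expect is articulating the ``would-have-happened'' semantics of $f_i, h_i, g_{i,j}$ carefully, since these are defined on sample paths on which the $i$th non-repeat may never actually arrive within the current iteration; but once one commits to extending the non-repeat bit-state trajectory irrespective of reset events, as the paper's own definitions already permit, the factorization above is clean and the result drops out.
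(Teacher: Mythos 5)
Your proposal is correct and follows essentially the same route as the paper's own proof: both arguments rest on separating the arrival-sequence randomness (which alone determines $a_{i,j}$) from the hashing randomness (which, via the never-reset counter-factual bit trajectory, determines $f_i$, $h_i$, and $g_{i,j}$), with your version merely making the product-space structure explicit. One small point worth tightening if you write this up: the hashes of ``the first $\max(i,j-1)$ non-repeats'' are \emph{selected} by the arrival sequence $S$, so the triple is not literally $\sigma(\{H_e\})$-measurable; the standard fix is to observe that, conditionally on $S$, the hashes of the distinct elements in order of first appearance are still i.i.d.\ with a law that does not depend on $S$, which yields the independence from every $\sigma(S)$-measurable quantity, in particular from $a_{i,j}$.
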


\begin{proof}
  The $a_{i,j}$ counts the number of repeat arrivals of the $i$th
  element during the $j$th interval when if the $i$th arrival and the
  $j$th interval had occurred.  Defining $a_{i,j}$ in this manner
  makes it agnostic as to whether the $i$th and $j$th intervals
  actually occurred, so it is independent of $h_i$ and $g_{i,j}$.

  Similarly, the
  sequence of arrivals during the $j$th can be drawn from the $j-1$
  previously arrived messages, finishing with an arrival of the $j$th
  non-repeat.  Note this process is unaffected by whether these
  messages are false positives or not.  Such information only affects
  the bits in the B.F., which impacts whether the $j$th interval takes
  place, but not the set of messages that would occur in the $j$th
  interval.
  \end{proof}

\begin{lma}
  Given $j>i, P(h_j = 1 | f_i = 1) = P(h_{j-1} = 1)$.
\end{lma}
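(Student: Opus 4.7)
My plan is to decompose $h_j$ through the Markov/renewal structure and exploit the disjointness of the hash driving $f_i$ from the hashes driving subsequent BF evolution. Write $h_j = h_{i-1} \cdot \gamma_i \cdot g_{i,j}$ where $g_{i,j} = \prod_{k=i+1}^{j} \gamma_k$, and let $B_{i-1}$ denote the number of bits set in the would-have-been BF just before the $i$th non-repeat arrives. The pivotal observation is that $f_i = 1$ asserts that the $k$ hashes of non-repeat $i$ land entirely within already-set bits, so non-repeat $i$ adds no new bits. Hence, on the event $\{f_i = 1, h_{i-1} = 1\}$, non-repeat $i$ cannot trigger a reset, so $\gamma_i = 1$ automatically and the BF state after the $i$th non-repeat equals $B_{i-1}$.

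Next, I would apply a coupling/relabeling argument on the hashes of non-repeats $i+1, \ldots, j$. Under the original process conditioned on $f_i = 1$, these $j-i$ hashes drive the BF forward from state $B_{i-1}$. In the reference process used to compute $P(h_{j-1} = 1)$, the corresponding hashes of non-repeats $i, \ldots, j-1$ drive the BF from the same state $B_{i-1}$ over $j-i$ steps. Both sequences are i.i.d.\ and, crucially, disjoint from the hash $H_i$ that determines $f_i$, so I can obtain $P(g_{i,j} = 1 \mid f_i = 1, B_{i-1} = b, h_{i-1} = 1) = P(g_{i-1,j-1} = 1 \mid B_{i-1} = b, h_{i-1} = 1)$. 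Call this common value $q(b)$; then conditionally on $B_{i-1} = b$, both sides of the claimed identity collapse to $q(b)$.

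The main obstacle I anticipate is marginalizing over $B_{i-1}$ while preserving the equality: $a$ priori, the conditional distribution of $B_{i-1}$ given $f_i = 1$ is biased toward larger values, since $P(f_i = 1 \mid B_{i-1} = b)$ is increasing in $b$. To neutralize this apparent bias, I would lean on the $f_i$-versus-$h_i$ independence already established in the appendix: because $f_i$ is defined in the would-have-been world using only $H_i$ (together with the reset-oblivious prior state), conditioning on $f_i = 1$ should be disentangled from the joint distribution of $(h_{i-1}, B_{i-1})$ on $\{h_{i-1} = 1\}$ in the sense required. The trickiest step is making this decoupling fully rigorous in the presence of actual resets, since $B_{i-1}$ and the reset history share underlying randomness; once it is justified, I can conclude $P(h_j = 1 \mid f_i = 1) = \sum_b P(h_{i-1} = 1, B_{i-1} = b)\, q(b) = P(h_{j-1} = 1)$, completing the argument.
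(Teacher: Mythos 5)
Your strategy is the same as the paper's own: since $f_i=1$ means the $i$th non-repeat sets no bits, the bit-setting process after the $(i-1)$st interval, conditioned on $f_i=1$, is coupled step-for-step with the unconditioned process that simply has one fewer non-repeat, shifting $h_j$ to $h_{j-1}$. (A minor indexing slip: under the paper's definitions $\gamma_i$ is determined by whether non-repeat $i-1$ pushes the count past $\sigma$, so $f_i=1$ forces $\gamma_{i+1}=1$, not $\gamma_i=1$; this does not change the structure of your argument.) The paper's proof is exactly this per-sample-path coupling, and it silently skips the step you single out as the hardest one.

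That step --- marginalizing over $B_{i-1}$ --- is a genuine gap, and your proposed repair does not close it. The independence lemma you want to invoke concerns $f_i$ and the indicator $h_i$, but what your final sum needs is that conditioning on $f_i=1$ leave the distribution of the bit count $B_{i-1}$ (on $\{h_{i-1}=1\}$) unchanged, and that is false: $P(f_i=1\mid B_{i-1}=b)=(b/M)^k$ is increasing in $b$, so conditioning on $f_i=1$ tilts $B_{i-1}$ stochastically upward, while your $q(b)$ is decreasing in $b$. Hence $\sum_b P(B_{i-1}=b\mid f_i=1)\,q(b)\le\sum_b P(B_{i-1}=b)\,q(b)$, generally strictly. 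A concrete check with $M=3$, $k=1$, $\sigma=2$, $i=3$, $j=5$ gives $P(h_5=1\mid f_3=1)=11/15$ versus $P(h_4=1)=7/9$, so the claimed equality fails and only the inequality $P(h_j=1\mid f_i=1)\le P(h_{j-1}=1)$ survives. (That direction still suffices for the downstream bound $E[f_ih_j]\le{\mathcal H}\,E[f_i]E[h_j]$ and the final theorem.) In short, you have reproduced the paper's reasoning and correctly located its weakest point, but the decoupling you hope to ``make rigorous'' cannot be: the statement should be weakened to an inequality rather than proved as an equality.
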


This Lemma states that the likelihood of the $j$th interval occuring,
given some previous non-repeat is a false-positive equals the
probability of the $j-1$st arrival occuring.

\begin{proof}

  The lefthand side describes the likelihood of having a $j$th
  interval occur when at least one prior interval (the $i$th) is a
  false positive.  The $i$th first non-repeat arrival being a false
  positive means that it does not set any bits in the BF, such that
  the following $j-i-1$ non-repeat arrivals outcomes would be the same
  for the case where the $i$th non-repeat arrival never happened (due
  to the independence of hash functions).  This means for any sample
  path, that after the $i-1$st interval (if it happens), there are
  $j-i-1$ remaining non-repeats whose arrival can potentially set bits
  to cause $h_j = 0$ (i.e., the $i+1$st through $j-1$st).

  Now consider the right-hand side of the equation, and consider the
  same sample path up to and through the $i-1$st interval.  After the
  completion of the $i-1$st interval (if it happens), there are also
  $j-i-1$ remaining non-repeats whose arrival can potentially set bits
  to cause $h_{j-1} = 0$ (i.e., the $i$th through $j-2$nd).

\end{proof}

\begin{cor}
  $E[f_i h_j] = E[f_i] E[h_{j-1}]$
\end{cor}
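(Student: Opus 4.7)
The plan is to derive the corollary directly from the preceding Lemma using the fact that both $f_i$ and $h_j$ are $\{0,1\}$-valued indicator random variables. For indicators, the expectation of the product equals the joint probability of both equalling 1, so $E[f_i h_j] = P(f_i=1, h_j=1)$. I would then apply the definition of conditional probability to factor this as $P(h_j = 1 \mid f_i = 1)\, P(f_i = 1)$.

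Next, I would substitute in the Lemma's identity $P(h_j = 1 \mid f_i = 1) = P(h_{j-1} = 1)$, which handles the subtle dependence between $f_i$ and $h_j$: a false positive at the $i$th non-repeat sets no bits, so the ``effective'' fill state when the $j$th non-repeat arrives matches what it would be after only $j-1$ non-repeats that did contribute bits. Finally, I would rewrite each probability as an expectation of an indicator, $P(f_i = 1) = E[f_i]$ and $P(h_{j-1} = 1) = E[h_{j-1}]$, concluding $E[f_i h_j] = E[f_i]\, E[h_{j-1}]$.

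There is essentially no obstacle here: the real technical content was absorbed into the Lemma, and this corollary is a one-line consequence of translating between conditional probabilities and expectations of products of indicators. The only minor care needed is to observe that the case $P(f_i = 1) = 0$ makes both sides vanish (so conditioning on $\{f_i = 1\}$ is vacuously unproblematic), which lets the factorization go through unconditionally.
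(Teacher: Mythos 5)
Your proposal is correct and follows exactly the paper's own argument: treat $f_i$ and $h_j$ as indicators, factor $E[f_i h_j] = P(h_j=1 \mid f_i=1)\,P(f_i=1)$, and substitute the preceding Lemma's identity $P(h_j=1\mid f_i=1)=P(h_{j-1}=1)$. The remark about the degenerate case $P(f_i=1)=0$ is a small extra care the paper omits but does not change the argument.
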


\begin{proof}
 Since $f_i$ and $h_j$ are indicators, we have that $E[f_i h_j] =
 P(f_i h_j = 1) = P(h_j = 1 | f_i = 1) P(f_i = 1) = P(h_{j-1}=1)
 P(f_i=1) = E[f_i] E[h_{j-1}]$.
\end{proof}

Define ${\mathcal H} = \max_j E[h_{j-1}] / E[h_j]$, i.e., the largest
expected decay factor between intervals.  Then we have that $E[f_i
h_j] = E[f_i] E[h_{j-1}] = E[f_i] E[h_{j}] (E[h_{j-1}] / E[h_j]) \le
E[f_i] E[h_{j}] {\mathcal H}$.


Renewal theory dictates that, followed by expectation-of-sums rule,
followed by independence of the r.v.s yields::
\begin{eqnarray}
  E[\fp_n] & = & \frac{E[\sum_i h_i f_i]}{E[\sum_i h_i]}\\
& = & \frac{\sum_i E[h_i] E[f_i]}{\sum_i E[h_i]}\\
  E[\fp_r] & = & \frac{E[\sum_i \sum_{j \ge i}f_i  h_j
                 a_{i,j}]}{E[\sum_i \sum_{j\ge i}   h_j a_{i,j}]}\\
           & = & \frac{\sum_i \sum_{j \ge i}E[f_i h_j] E[a_{i,j}]} {\sum_i
              \sum_{j\ge i}  E[ h_j]E[ a_{i,j}]}\\
           & = & \frac{\sum_i \sum_{j \ge i}E[f_i] E[ h_{j-1}] E[a_{i,j}]} {\sum_i
                 \sum_{j\ge i}  E[ h_j]E[ a_{i,j}]}\\
           & \le & {\mathcal H} \frac{\sum_i \sum_{j \ge i}E[f_i] E[ h_{j}] E[a_{i,j}]} {\sum_i
                   \sum_{j\ge i}  E[ h_j]E[ a_{i,j}]}\\
  & = & {\mathcal H} \frac{\sum_i E[f_i] E[ h_{i}] \sum_{j \ge i}
        E[g_{i,j}] E[a_{i,j}]} {\sum_i E[ h_i]
                   \sum_{j\ge i}   E[g_{i,j}] E[ a_{i,j}]} \label{reduction-final}
\end{eqnarray}

Define $\alpha_i = \sum_{j \ge i} g_{i,j} a_{i,j}$.

\begin{lma}
  $E[\alpha_i] > E[\alpha_{i+1}]$.
\end{lma}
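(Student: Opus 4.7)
The plan is to view $E[\alpha_i]$ as the expected number of arrivals of the $i$th distinct element $X_i$ during a single BF cycle, conditional on $X_i$ appearing in the cycle. Two effects then favor $E[\alpha_i] \ge E[\alpha_{i+1}]$: first, starting one interval earlier, $X_i$ has one additional interval in which to accumulate arrivals before the cycle resets; second, as the earlier-arriving non-repeat, $X_i$ is stochastically more popular than $X_{i+1}$, which is the content of Lemma~\ref{lma:underlying}.

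Concretely, I would first apply Lemma~\ref{a-lemma} to obtain
\[
E[\alpha_i] = \sum_{j \ge i} E[g_{i,j}]\, E[a_{i,j}],
\]
with $E[a_{i,i}] = 1$. Using Lemma~\ref{gamma-lemma} and the resulting factorization $E[g_{i,j}] = E[\gamma_{i+1}]\, E[g_{i+1,j}]$ valid for $j \ge i+1$, the difference telescopes into
\[
E[\alpha_i] - E[\alpha_{i+1}] = E[\gamma_{i+1}]\, E[a_{i,i+1}] + \sum_{j > i+1} E[g_{i+1,j}] \left( E[\gamma_{i+1}]\, E[a_{i,j}] - E[a_{i+1,j}] \right).
\]
The leading term is non-negative, and strictly positive in any practical RBF setting, since $X_i$ has positive probability of repeating within interval $i+1$ (so $E[a_{i,i+1}] > 0$) and the cycle has positive probability of continuing past interval $i$ (so $E[\gamma_{i+1}] > 0$).

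The main obstacle lies in the residual sum, whose summand is not individually sign-definite: Lemma~\ref{lma:underlying} guarantees $E[a_{i,j}] \ge E[a_{i+1,j}]$, but the shrink factor $E[\gamma_{i+1}] \le 1$ can erase that dominance if attacked naively. The cleanest route I would take is to bypass this term-by-term shift via a sample-path coupling. On a single cycle containing $N$ non-repeats, one has the identity $\alpha_i - \alpha_{i+1} = a_{i,i+1} + \sum_{j=i+2}^{N} (a_{i,j} - a_{i+1,j})$, and by extending the bijection that underlies Lemma~\ref{lma:underlying} (swapping the labels of $X_i$ and $X_{i+1}$ along the underlying arrival sequence, thereby preserving likelihood) one shows that each contribution $E[(a_{i,j} - a_{i+1,j})\,\mathbf{1}_{N \ge j}]$ is non-negative. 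Combined with the strictly positive $E[a_{i,i+1}]$ contribution from the $j=i+1$ term, this yields $E[\alpha_i] > E[\alpha_{i+1}]$, completing the argument.
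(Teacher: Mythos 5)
Your setup coincides with the paper's --- factor $E[\alpha_i]=\sum_{j\ge i}E[g_{i,j}]E[a_{i,j}]$ via the independence in Lemma~\ref{a-lemma} --- but your alignment of terms is different, and that is where the argument breaks. By pairing the term of $\alpha_i$ with second index $j$ against the term of $\alpha_{i+1}$ with the \emph{same} $j$, you are forced into $E[g_{i,j}]=E[\gamma_{i+1}]E[g_{i+1,j}]$ and the residual summand $E[g_{i+1,j}]\left(E[\gamma_{i+1}]E[a_{i,j}]-E[a_{i+1,j}]\right)$. You correctly flag that this is not sign-definite, but the coupling you propose only controls the $a$'s; it cannot remove the stray factor $E[\gamma_{i+1}]\le 1$. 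For a uniform message distribution one has $E[a_{i,j}]=E[a_{i+1,j}]$ for $j>i+1$, so every residual term is strictly negative whenever $E[\gamma_{i+1}]<1$, and your argument reduces to hoping the single leading term $E[\gamma_{i+1}]E[a_{i,i+1}]$ outweighs the entire negative tail, which you do not establish. The sample-path identity $\alpha_i-\alpha_{i+1}=a_{i,i+1}+\sum_{j=i+2}^{N}(a_{i,j}-a_{i+1,j})$ also holds only on the event $\gamma_{i+1}=1$: on $\gamma_{i+1}=0$ one has $\alpha_i=1$ while $\alpha_{i+1}=1+\sum_{j\ge i+2}g_{i+1,j}a_{i+1,j}\ge 1$, so the pathwise difference can be negative and the identity does not integrate to $E[\alpha_i]-E[\alpha_{i+1}]$. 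Carried through with the correct weights, your computation yields $E[h_i]E[\alpha_i]\ge E[h_{i+1}]E[\alpha_{i+1}]$, i.e.\ $E[\alpha_i]\ge E[\gamma_{i+1}]E[\alpha_{i+1}]$, which is strictly weaker than the lemma.

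The paper avoids this by pairing the $t$th term of each sum, i.e.\ comparing $E[g_{i,i+t}]E[a_{i,i+t}]$ against $E[g_{i+1,i+1+t}]E[a_{i+1,i+1+t}]$. Both $g$-products then contain the same number of $\gamma$-factors, and Claim~\ref{lma-gamma} gives $E[g_{i,i+t}]\ge E[g_{i+1,i+1+t}]$ factor by factor, so no loose $E[\gamma_{i+1}]$ survives; Lemma~\ref{lma:underlying} is then invoked for the $a$-factors. If you want to repair your route, adopt this shifted alignment --- noting that it requires comparing $E[a_{i,i+t}]$ with $E[a_{i+1,i+1+t}]$, i.e.\ intervals at the same offset from each element's first arrival, which is a (needed) variant of Lemma~\ref{lma:underlying} rather than that lemma as literally stated.
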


\begin{proof}
  We show that the $k$th term of $\alpha_i$ is larger than the $k$th
  term of $\alpha_{i+1}$, i.e.,
  $E[g_{i,k} a_{i,k}] = E[g_{i,k}]E[a_{i,k}]\ge E[g_{i+1,k+1}]E[
  a_{i+1,k}] = E[g_{i+1,k+1} a_{i+1,k}]$.  This follows from Lemma
  \ref{a-lemma} that $a_{i,j}$ and the $g_{i,j}$ are
  independent, by Claim \ref{lma-gamma} that $E[\gamma_i]$ are
  decreasing such that $E[g_{i,k}] \ge E[g_{i+1,k+1}]$ and Claim
  \ref{lma:underlying} that $E[a_{i,j}] \ge E[a_{i+1,j}]$.  Hence
  the infinite sum forming $E[\alpha_i]$ is no less than the infinite
  sum forming $E[\alpha_{i+1}]$.

\end{proof}

\newcommand{\AAA}{{\mathcal A}}

Substituting $H_i = E[h_i], F_i = E[f_i], \AAA_i = E[\alpha_i]$, the
above formulae simplify to:
\begin{eqnarray}
   E[\fp_n] & = & \frac{\sum_i H_i F_i}{\sum_i H_i}\\
  E[\fp_r] & = & \frac{\sum_i H_i \AAA_i F_i}{\sum_i H_i
             \AAA_i}
\end{eqnarray}

\subsubsection{Proving the Inequality}

  \begin{lma}
    \label{lma:basic-inequality}
If $\AAA_1 \ge \AAA_2 > 0$ and $F_2 \ge F_1 > 0$., then $\frac{H_1 F_1 + H_2
F_2}{H_1 + H_2} \ge \frac{\AAA_1 H_1 F_1 + \AAA_2 H_2
F_2}{\AAA_1 H_1 + \AAA_2 H_2}$.
\end{lma}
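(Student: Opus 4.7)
The plan is to prove this by clearing denominators and observing that the resulting cross-difference factors in a convenient way. Both denominators $H_1 + H_2$ and $\AAA_1 H_1 + \AAA_2 H_2$ are strictly positive under the hypotheses, so the inequality is equivalent to showing that the numerator of the difference
\[
\Delta \;=\; (H_1 F_1 + H_2 F_2)(\AAA_1 H_1 + \AAA_2 H_2) \;-\; (\AAA_1 H_1 F_1 + \AAA_2 H_2 F_2)(H_1 + H_2)
\]
is non-negative.

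The main step is to expand both products and observe that the four terms involving $H_1^2$ or $H_2^2$ cancel pairwise, leaving only the mixed $H_1 H_2$ terms. Collecting what remains, I would expect the expression to simplify to
\[
\Delta \;=\; H_1 H_2 \bigl[\AAA_2 F_1 + \AAA_1 F_2 - \AAA_1 F_1 - \AAA_2 F_2\bigr]
\;=\; H_1 H_2 \,(\AAA_1 - \AAA_2)(F_2 - F_1).
\]
Under the hypotheses $\AAA_1 \ge \AAA_2 > 0$ and $F_2 \ge F_1 > 0$ (and clearly $H_1, H_2 \ge 0$ as they are expected values of indicators), every factor is non-negative, so $\Delta \ge 0$, which yields the claim.

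There is no real obstacle here — the result is essentially an instance of the Chebyshev sum inequality for two terms, with weights $H_1, H_2$. The only thing worth being careful about is checking that the sign conventions line up: $\AAA_1 \ge \AAA_2$ pairs with $F_2 \ge F_1$ to give a product of non-negative factors (had the two monotonicities gone the same direction, the sign would flip). Once the algebra is carried out, the factorization makes the direction of the inequality transparent.

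To make the argument useful for the enclosing proof, I would then note that this two-term inequality generalizes inductively to the full sums $\sum_i H_i F_i / \sum_i H_i$ versus $\sum_i H_i \AAA_i F_i / \sum_i H_i \AAA_i$ by the standard rearrangement/Chebyshev argument, since Lemma \ref{lma:monotonic} gives $F_i$ non-decreasing in $i$ and the preceding lemma gives $\AAA_i$ non-increasing in $i$; combined with the already-derived bound $E[\fp_r] \le {\mathcal H}\cdot(\text{weighted ratio})$ in (\ref{reduction-final}), this completes the chain $B/N \ge {\mathcal H}\, F/A$ claimed at the start of the appendix.
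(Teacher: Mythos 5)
Your proof is correct and is essentially the paper's own argument run in reverse: the paper starts from $F_2(\AAA_1-\AAA_2)\ge F_1(\AAA_1-\AAA_2)$ and multiplies up to the cross-product inequality, while you compute the cross-difference and factor it as $H_1H_2(\AAA_1-\AAA_2)(F_2-F_1)\ge 0$ --- the same two-term Chebyshev identity. Nothing further is needed.
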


\begin {proof}
$\AAA_1 - \AAA_2 \ge 0$ and $F_1 \ge  F_2$ yields that $F_2
(\AAA_1 - \AAA_2) \ge F_1 (\AAA_1 - \AAA_2)$, such that $F_2
\AAA_1 + F_1 \AAA_2 \ge F_1 \AAA_1 + F_2 \AAA_2$.  Multiplying
both sides by $H_1 H_2$ and then adding $(H_1)^2 \AAA_1 F_1 +
(H_2)^2 \AAA_2 F_2$ to both sides yields:
\begin{eqnarray}
  \nonumber (H_1 \AAA_1 + H_2 \AAA_2)(H_1 F_1 + H_2 F_2)
\nonumber   & = &
                  (H_1)^2 \AAA_1 F_1 + H_1 H_2 F_1 \AAA_2 + H_1 H_2 F_2 \AAA_1
        + (H_2)^2 \AAA_2 F_2\\
\nonumber  & \ge &
          (H_1)^2 \AAA_1 F_1 + H_1 H_2 F_1
          \AAA_1 +  H_1 H_2 F_2 \AAA_2 +
          (H_2)^2 \AAA_2 F_2\\
  & = & (H_1 + H_2) (H_1 F_1 \AAA_1 + H_2 F_2 \AAA_2)
\end{eqnarray}

Dividing both sides by $(H_1 + H_2)(H_1 \AAA_1 + H_2 \AAA_2)$
yields the result.

\end{proof}

\begin{lma}
  \label{inequality-general}
If for all $i$ we have 
$F_{i+1} \ge F_i$ and $\AAA_{i} \ge \AAA_{i+1}$, then $\frac{\sum_{i=1}^m H_i F_i}{
\sum_{i=1}^m H_i} \ge \frac{\sum_{i=1}^m \AAA_i H_i F_i}{\sum_{i=1}^{m} \AAA_i H_i}$.
\end{lma}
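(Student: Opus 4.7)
The plan is to prove the inequality by a direct algebraic argument that generalizes the pairwise swap used in Lemma \ref{lma:basic-inequality}. Since all $H_i$ and $\AAA_i$ are positive, clearing denominators turns the desired inequality into the equivalent statement
$$\left(\sum_{i=1}^m H_i F_i\right)\left(\sum_{j=1}^m \AAA_j H_j\right) \;\ge\; \left(\sum_{i=1}^m H_i\right)\left(\sum_{j=1}^m \AAA_j H_j F_j\right),$$
which, after bringing everything to one side and expanding both products into double sums, reduces to showing that
$$D \;:=\; \sum_{i=1}^m \sum_{j=1}^m H_i H_j \AAA_j \,(F_i - F_j) \;\ge\; 0.$$

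The key step is to symmetrize $D$ by swapping the dummy indices $(i,j)$ in a duplicate copy and averaging the two representations, which yields
$$D \;=\; \tfrac{1}{2} \sum_{i=1}^m \sum_{j=1}^m H_i H_j \,(F_i - F_j)(\AAA_j - \AAA_i).$$
The monotonicity hypotheses then take over pointwise: whenever $i < j$, both $F_i - F_j$ and $\AAA_j - \AAA_i$ are non-positive so their product is non-negative; whenever $i > j$ both factors flip sign and the product is again non-negative; and the diagonal terms $i=j$ vanish outright. Hence every summand contributes non-negatively, $D \ge 0$, and the desired inequality follows. As a sanity check, when $m=2$ the only off-diagonal pairs collapse to exactly the statement of Lemma \ref{lma:basic-inequality}.

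An alternative route would be induction on $m$ using Lemma \ref{lma:basic-inequality} as the base case: isolate the $m$-th term, cross-multiply, invoke the $(m-1)$-term inductive hypothesis to dispose of the interior product, and reduce the residual to showing $\sum_{i=1}^{m-1} H_i (F_m - F_i)(\AAA_i - \AAA_m) \ge 0$, which is again immediate from monotonicity since $F_m \ge F_i$ and $\AAA_i \ge \AAA_m$ for every $i < m$. I would nevertheless favor the direct symmetrization proof for its brevity and the clarity with which the pairwise sign analysis appears. The main potential pitfall in either approach is purely bookkeeping---tracking signs correctly when cross-multiplying and verifying that the symmetrization genuinely preserves the entire double sum---rather than any substantive mathematical obstacle.
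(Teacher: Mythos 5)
Your proof is correct, but it takes a genuinely different route from the paper's. You clear denominators and symmetrize the resulting double sum into $\tfrac{1}{2}\sum_{i,j} H_i H_j (F_i - F_j)(\AAA_j - \AAA_i)$, whose terms are pointwise non-negative because $F$ and $\AAA$ are oppositely monotone --- this is exactly the classical proof of Chebyshev's sum inequality with weights $H_i$, and the algebra checks out (the symmetrized form does reproduce $D$, since the cross terms pair up under the index swap). The paper instead argues by induction on $m$: it introduces the weighted aggregates $\bar{F}$, $\bar{\AAA}$, and $\beta$, shows $\bar{F}\le F_m$ and $\bar{\AAA}\ge \AAA_m$, extracts $\beta\le\bar{\AAA}$ from the inductive hypothesis, and then applies the two-term Lemma~\ref{lma:basic-inequality} to the collapsed pair $(\bar{F},F_m)$, $(\sum_{i<m}H_i, H_m)$, $(\bar{\AAA},\AAA_m)$. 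Your symmetrization buys brevity and transparency --- it needs only $H_i\ge 0$ and the two monotonicity hypotheses, dispenses with the auxiliary averages, and subsumes Lemma~\ref{lma:basic-inequality} as the $m=2$ case rather than relying on it; the paper's induction buys a structure that reuses the already-proved pairwise lemma and dovetails with the subsequent passage to the limit $m\to\infty$ in the theorem. Your sketched inductive alternative is also sound and, if carried out, is actually cleaner than the paper's version: isolating the $m$th term leaves the inductive difference $S_1 S_2 - S_0 S_3$ plus a residual that factors directly as $H_m\sum_{i=1}^{m-1} H_i (F_m - F_i)(\AAA_i - \AAA_m)\ge 0$, with no need for $\bar{F}$, $\bar{\AAA}$, or $\beta$. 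The only point worth stating explicitly in a final write-up is the positivity of the two denominators (so that cross-multiplication preserves the direction of the inequality), which holds here since the $H_i$ and $\AAA_i$ arise as expectations of non-negative random variables with at least one strictly positive term.
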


\begin{proof}
  This can be proven inductively on the number of terms in the sum,
  and taking the limit of the number of terms to $\infty$.

For the base case,   Lemma
\ref{lma:basic-inequality} applies directly when $m=2$.  For larger
$m$, we define the following:

\begin{eqnarray}
  \label{barF-def}
  \bar{F} & = & \frac{\sum_{i=1}^{m-1} H_i F_i}{\sum_{i=1}^{m-1} H_i}\\
  \beta & = & \frac{\sum_{i=1}^{m-1} H_i F_i \AAA_i}{\sum_{i=1}^{m-1}
              H_i F_i}\\
  \bar{\AAA} & = & \frac{\sum_{i=1}^{m-1} H_i \AAA_i}{\sum_{i=1}^{m-1}
               H_i }\\
\end{eqnarray}
Also note that  $F_i$ increasing yields
$\bar{F} =  \frac{\sum_{i=1}^{m-1} H_i F_i}{\sum_{i=1}^{m-1}
  H_i} \le \frac{\sum_{i=1}^{m-1} H_i F_{m-1}}{\sum_{i=1}^{m-1} H_i}
= F_{m-1} \frac{\sum_{i=1}^{m-1} H_i}{\sum_{i=1}^{m-1} H_i} \le F_m$ and that because we have $\AAA_i$ decreasing, we similarly have
$\bar{\AAA} = \frac{\sum_{i=1}^{m-1} H_i F_i
  \AAA_{i}}{\sum_{i=1}^{m-1} H_i F_i} \ge \frac{\sum_{i=1}^{m-1} H_i F_i
  \AAA_{m-1}}{\sum_{i=1}^{m-1} H_i F_i} \ge \AAA_m$.

Finally, we make our inductive assumption that 
$$
\frac{\sum_{i=1}^{m-1} H_i F_i \AAA_i}{\sum_{i=1}^{m-1} H_i
\AAA_i} \le \frac{\sum_{i=1}^{m-1} H_i
F_i}{\sum_{i=1}^{m-1} H_i}.$$

We can rewrite the left-hand side as
$\frac{\sum_{i=1}^{m-1} H_i F_i \AAA_i}{\sum_{i=1}^{m-1} H_i
\AAA_i}  = \frac{\beta \sum_{i=1}^{m-1} H_i F_i}{\bar{\AAA}
  \sum_{i=1}^{m-1} H_i} $, and this being less than the right hand
side yields $\beta \le \bar{\AAA}$.

This yields:
\begin{eqnarray}
\nonumber \frac{(\sum_{i=1}^{m-1} H_i F_i \AAA_i)+ H_m F_m \AAA_m}{(\sum_{i=1}^{m-1} H_i
\AAA_i) + H_m \AAA_m} &= &   \frac{(\beta \sum_{i=1}^{m-1} H_i F_i) + H_m
F_m \AAA_m}{(\bar{\AAA} \sum_{i=1}^{m-1}  H_i)
+ H_m \AAA_m} \le \frac{(\bar{\AAA} \sum_{i=1}^{m-1} H_i F_i) + H_m
F_m \AAA_m}{(\bar{\AAA} \sum_{i=1}^{m-1} H_i)
+ H_m \AAA_m}\\
\label{ineq-pre}
& = & \frac{(\bar{\AAA} \bar{F} \sum_{i=1}^{m-1} H_i) + H_m F_m \AAA_m}{(\bar{\AAA}
  \sum_{i=1}^{m-1}H_i) + H_m \AAA_m} 
\end{eqnarray}

We can apply Lemma \ref{lma:basic-inequality} to this final outcome,
using $\bar{F}$ and $F_m$ respectively as the $F_1$ and $F_2$ of Lemma
\ref{lma:basic-inequality},
$\sum_{i=1}^{m-1}H_i$ and $H_m$ respectively as the $H_1$ and $H_2$ of
Lemma \ref{lma:basic-inequality}, and
$\bar{\AAA}$  and $\AAA_m$ respectively as the $\AAA_1$ and $\AAA_2$
of Lemma \ref{lma:basic-inequality} to
give:

\begin{eqnarray}
\label{ineq1}
\frac{(\sum_{i=1}^{m} H_i F_i \AAA_i)}{(\sum_{i=1}^{m} H_i
\AAA_i)}  & = & \frac{(\sum_{i=1}^{m-1} H_i F_i \AAA_i)+ H_m F_m \AAA_m}{(\sum_{i=1}^{m-1} H_i
                \AAA_i) + H_m \AAA_m} 
   \le  \frac{(\bar{\AAA} \bar{F} \sum_{i=1}^{m-1} H_i) + H_m F_m \AAA_m}{(\bar{\AAA}
  \sum_{i=1}^{m-1}H_i) + H_m \AAA_m} \\
  \label{ineq2}
& \le & \frac{\bar{F}
        \sum_{i=1}^{m-1} H_i + H_m F_m}{\sum_{i=1}^{m-1}H_i + H_m}
        = \frac{
        \sum_{i=1}^{m-1} H_i F_i + H_m F_m}{\sum_{i=1}^{m-1}H_i +
        H_m}\\
  & = & \frac{
        \sum_{i=1}^{m} H_i F_i}{\sum_{i=1}^{m}H_i}
\end{eqnarray}
where (\ref{ineq1}) follows from (\ref{ineq-pre}), the inequality of
(\ref{ineq2}) follows via application of Lemma
\ref{lma:basic-inequality}, and the equality of (\ref{ineq2}) follows
via substitution of (\ref{barF-def}).

\end{proof}

\begin{thm}
  $E[\fp_n] \ge E[\fp_r]/ {\mathcal H}$
\end{thm}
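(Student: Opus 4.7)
The plan is to assemble two pieces the appendix has already built. First, the chain of manipulations culminating at (\ref{reduction-final}) gives $E[\fp_r] \le \mathcal{H}\cdot \frac{\sum_i H_i \AAA_i F_i}{\sum_i H_i \AAA_i}$, after substituting the identity $E[f_i h_j] = E[f_i]E[h_{j-1}]$ and bounding $E[h_{j-1}]/E[h_j]\le\mathcal{H}$. Second, we already have $E[\fp_n] = \frac{\sum_i H_i F_i}{\sum_i H_i}$. So all that remains is the pure weighted-average inequality $\frac{\sum_i H_i F_i \AAA_i}{\sum_i H_i \AAA_i} \le \frac{\sum_i H_i F_i}{\sum_i H_i}$; dividing the first bound by $\mathcal{H}$ and chaining with this comparison yields $E[\fp_n] \ge E[\fp_r]/\mathcal{H}$.

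To obtain that weighted-average inequality, I would invoke Lemma \ref{inequality-general} with weights $H_i$, values $F_i$, and multipliers $\AAA_i$. First I would verify its two monotonicity hypotheses in our setting: $F_i = E[f_i]$ is non-decreasing in $i$ by Lemma \ref{lma:monotonic}, and $\AAA_i = E[\alpha_i]$ is non-increasing in $i$ by the unnamed lemma stated just before the theorem (which chains Lemma \ref{a-lemma}, Claim \ref{lma-gamma}, and Lemma \ref{lma:underlying} in a termwise comparison inside the sum $\alpha_i = \sum_{j\ge i} g_{i,j}\, a_{i,j}$). With these in hand, Lemma \ref{inequality-general} applies verbatim to give exactly the desired weighted-average comparison.

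The one genuinely delicate point, which I expect to be the main obstacle, is that Lemma \ref{inequality-general} is stated for finite sums of $m$ terms, whereas the series appearing in $E[\fp_n]$ and $E[\fp_r]$ are indexed by all $i \ge 1$. I would close this gap by observing that $H_i = E[h_i]$ is non-increasing and converges to $0$ (a single RBF cycle resets after a finite expected number of non-repeat arrivals, so the probability that the $i$-th non-repeat falls in any given cycle vanishes as $i\to\infty$), while $F_i \in [0,1]$ and $\AAA_i$ is bounded by the expected total arrivals per cycle. Hence all four series are absolutely summable, the finite-$m$ inequality of Lemma \ref{inequality-general} holds for every truncation, and letting $m\to\infty$ preserves the weak inequality, producing the theorem.
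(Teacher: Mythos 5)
Your proposal is correct and follows essentially the same route as the paper: the paper's proof likewise combines the bound at (\ref{reduction-final}) with Lemma \ref{inequality-general} (whose hypotheses are supplied by Lemma \ref{lma:monotonic} and the monotonicity of $E[\alpha_i]$) and then takes the limit $m \rightarrow \infty$. Your additional care in justifying the passage to the infinite sum is a welcome elaboration of a step the paper leaves implicit, but it is not a different argument.
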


\begin{proof}
  Follows from inequality \ref{reduction-final} and Lemma
  \ref{inequality-general}, and applying the limit as $m \rightarrow \infty$.
\end{proof}

\subsubsection{Thoughts on ${\mathcal H}$}

Note the upper bound is thinned by a factor of $1 / {\mathcal H}$ where
$1 / {\mathcal H} = \min_j E[h_{j}] / E[h_{j-1}]$, where for each $j$, 
$E[h_{j}] / E[h_{j-1}] = P(h_j=1) /
P(h_{j-1}=1)$.  Since $h_j = 1 \rightarrow h_{j-1} = 1$, this equals
$P(h_j=1, h_{j-1} = 1) / P(h_{j-1}=1) = P(h_j=1 | h_{j-1} = 1) =
\gamma_j$.  Due to the variance in the number of bits set after the
$j$th non-repeat, it is unlikely that any particular $\gamma_j$ will
be significantly less than 1: this can be verified empricially (the
underlying distribution does not matter).

}


\end{document}